\documentclass[journal, 10pt]{IEEEtran}
\usepackage{cite}
\usepackage[pdftex]{graphicx}
\usepackage{epstopdf}
\graphicspath{{Figures/}}
\usepackage[cmex10]{amsmath}
\interdisplaylinepenalty=2500

\usepackage[tight,footnotesize]{subfigure}

\usepackage{stfloats}
\usepackage{url}
\usepackage{amsmath,amsfonts,amssymb,epsfig}
\usepackage{amsthm}
\newcounter{opteq}
\newenvironment{opteq}{\refstepcounter{opteq}\align}{\tag{P\theopteq}\endalign}
\usepackage{todonotes}
 \usepackage{enumitem}

\usepackage[capitalize]{cleveref}

\newtheorem{definition}{Definition}
\newtheorem{lemma}{Lemma}
\newtheorem{proposition}{Proposition}
\newtheorem{thm}{Theorem}
\newtheorem{corollary}{Corollary}

\crefname{thm}{Theorem}{Theorems}

\usepackage{algorithm}
\usepackage{algorithmicx}
\usepackage[noend]{algpseudocode}

\usepackage{mathtools}
\usepackage{bbm}
\usepackage{arydshln}
\usepackage{mathrsfs}
\usepackage{bm}
\usepackage{pifont}

\usepackage{threeparttable}
\usepackage{multirow}
\usepackage[makeroom]{cancel}

\newcommand{\txt}[1]{\text{\normalfont #1}}

\DeclareMathOperator*{\argmax}{arg\,max}

\title{Sparse Symmetric Linear Arrays with Low Redundancy and a Contiguous Sum Co-Array}

\author{Robin~Rajam\"{a}ki,~\IEEEmembership{Student Member,~IEEE,}
	and~Visa~Koivunen,~\IEEEmembership{Fellow,~IEEE}
	\thanks{The authors are with the Department of Signal Processing and Acoustics, Aalto University, Espoo, Finland (e-mail: robin.rajamaki@aalto.fi, visa.koivunen@aalto.fi).}
	\thanks{This work was supported by the Academy of Finland project \emph{Massive and Sparse Antenna Array Processing for Millimeter-Wave Communications}.}}%

\begin{document}
\maketitle

\begin{abstract}
Sparse arrays can resolve significantly more scatterers or sources than sensor by utilizing the co-array --- a virtual array structure consisting of pairwise differences or sums of sensor positions. Although several sparse array configurations have been developed for passive sensing applications, far fewer active array designs exist. In active sensing, the sum co-array is typically more relevant than the difference co-array, especially when the scatterers are fully coherent. This paper proposes a general symmetric array configuration suitable for both active and passive sensing. We first derive necessary and sufficient conditions for the sum and difference co-array of this array to be contiguous. We then study two specific instances based on the Nested array and the Kl{\o}ve-Mossige basis, respectively. In particular, we establish the relationship between the minimum-redundancy solutions of the two resulting symmetric array configurations, and the previously proposed Concatenated Nested Array (CNA) and Kl{\o}ve Array (KA). Both the CNA and KA have closed-form expressions for the sensor positions, which means that they can be easily generated for any desired array size. The two array structures also achieve low redundancy, and a contiguous sum and difference co-array, which allows resolving vastly more scatterers or sources than sensors.
\end{abstract}
\begin{IEEEkeywords}
	Active sensing, sparse array configuration, symmetry, sum co-array, difference co-array, minimum redundancy.
\end{IEEEkeywords}

\section{Introduction}
\emph{Sensor arrays} are a key technology in for example, radar, wireless communication, medical imaging, radio astronomy, sonar, and seismology \cite{vantrees2002optimum}. The key advantages of arrays include spatial selectivity and the capability to mitigate interference. However, conventional \emph{uniform array} configurations may become prohibitively expensive, when a high spatial resolution facilitated by a large electrical aperture, and consequently a large number of sensors is required. 

\emph{Sparse arrays} allow for significantly reducing the number of sensors and costly RF-IF chains, whilst resolving vastly more scatterers or signal sources than sensors. This is facilitated by a virtual array model called the \emph{co-array} \cite{haubrich1968array,hoctor1990theunifying}, which is commonly defined in terms of the pairwise differences or sums of the physical sensor positions \cite{hoctor1990theunifying}. Uniform arrays have a co-array with redundant virtual sensors, which allows the number of physical sensors to be reduced without affecting the number of unique co-array elements. This enables sparse arrays to identify up to $ \mathcal{O}(N^2) $ signal sources using only $ N $ sensors \cite{koochakzadeh2016cramerrao,wang2017coarrays,liu2017cramerrao}. A \emph{contiguous} co-array is often desired, since it maximizes the number of virtual sensors for a given array aperture. The properties of the resulting uniform virtual array model, such as the Vandermonde structure of the virtual steering matrix, can also be leveraged in array processing \cite{pal2010nested,liu2015remarks}.

Typical sparse array designs, such as the \emph{Minimum-Redundancy Array} (MRA) \cite{moffet1968minimumredundancy,hoctor1996arrayredundancy}, seek to maximize the number of contiguous co-array elements for a given number of physical sensors $ N $. The minimum-redundancy property means that no physical array configuration can achieve a larger contiguous co-array. Although optimal, the MRA lacks a closed-form expression for its sensor positions, and quickly becomes impractical to compute, as the search space of the combinatorial optimization problem that needs to be solved grows exponentially with $ N $. Consequently, large sparse arrays have to be constructed by sub-optimal means, yielding low-redundancy, rather than provably minimum-redundancy, array configurations. For instance, smaller (but computable) MRAs can be extended to larger apertures by repeating regular substructures in the array \cite{hoctor1996arrayredundancy}, or by recursively nesting them in a fractal manner \cite{ishiguro1980minimum}. Recent research into such \emph{recursive} or \emph{fractal arrays} \cite{liu2017maximally,yang2018aunified,cohen2019sparsefractal,cohen2020sparse} has also revived interest in \emph{symmetric} array configurations. Symmetry in either the physical or co-array domain can further simplify array design \cite{haupt1994thinned} and calibration \cite{friedlander1991direction}, or be leveraged in detection \cite{xu1994detection}, source localization \cite{roy1989esprit,swindlehurst1992multiple,gao2005ageneralized, wang2013mixed}, and adaptive beamforming \cite[p.~721]{vantrees2002optimum}. \emph{Parametric} arrays are also of great interest, as their sensor positions can be expressed in closed form. This facilitates the simple design and optimization of the array geometry. For example, the redundancy of the array may be minimized to utilize the co-array as efficiently as possible. Notable parametric arrays include the Wichmann \cite{wichmann1963anote,pearson1990analgorithm,linebarger1993difference}, Co-prime \cite{vaidyanathan2011sparsesamplers}, Nested \cite{pal2010nested}, and Super Nested Array \cite{liu2016supernested,liu2016supernestedii}.

Sparse array configurations have been developed mainly for passive sensing, where the \emph{difference co-array} can be exploited if the source signals are incoherent or weakly correlated. Far fewer works consider the \emph{sum co-array}, which is more relevant in active sensing applications, such as radar or microwave and ultrasound imaging, where scatterers may (but need not) be fully coherent \cite{hoctor1990theunifying,hoctor1992highresolution,ahmad2004designandimplementation}. In particular, the design of low-redundancy sparse arrays with overlapping transmitting and receiving elements has not been extensively studied. Some of our recent works have addressed this sum co-array based array design problem by proposing symmetric extensions to existing parametric array configurations that were originally designed to only have a contiguous difference co-array \cite{rajamaki2017sparselinear,rajamaki2018symmetric,rajamaki2020sparselow}. Symmetry can thus provide a simple means of achieving a contiguous sum co-array. However, a unifying analysis and understanding of such symmetric sparse arrays is yet lacking from the literature. The current work attempts to fill this gap.

\subsection{Contributions and organization}
This paper focuses on the design of sparse linear active arrays with a contiguous sum co-array. The main contributions of the paper are twofold. Firstly, we propose a general symmetric sparse linear array design. We establish necessary and sufficient conditions under which the sum and difference co-array of this array are contiguous. We also determine sufficient conditions that greatly simplify array design by allowing for array configurations with a contiguous difference co-array to be leveraged, thanks to the symmetry of the proposed array design. This connects our work to the abundant literature on mostly asymmetric sparse arrays with a contiguous difference co-array \cite{linebarger1993difference,pal2010nested,liu2016supernestedarrays,zheng2019misc}. Moreover, it provides a unifying framework for symmetric configurations relevant to both active and passive sensing \cite{rajamaki2017sparselinear,rajamaki2018symmetric,rajamaki2020sparselow}.

The second main contribution is a detailed study of two specific instances of this symmetric array --- one based on the Nested Array (NA) \cite{pal2010nested}, and the other on the Kl{\o}ve-Mossige basis from additive combinatorics \cite{mossige1981algorithms}. In particular, we clarify the connection between these symmetric arrays and the recently studied \emph{Concatenated Nested Array} (CNA) \cite{rohrbach1937beitrag,rajamaki2017sparselinear} and \emph{Kl{\o}ve Array} (KA) \cite{klove1981class,rajamaki2020sparselow}. We also derive the minimum redundancy parameters for both the CNA and KA. Additionally, we show that the minimum-redundancy symmetric NA reduces to the CNA. Both the CNA and KA can be generated for practically any number of sensors, as their positions have closed-form expressions.

The paper is organized as follows. \cref{sec:preliminaries} introduces the signal model and the considered array figures of merit. \cref{sec:MRA} briefly reviews the MRA and some of its characteristics. \cref{sec:symmetric} then presents the general definition of the proposed symmetric array, and outlines both necessary and sufficient conditions for its sum co-array to be contiguous. In \cref{sec:generators}, we study two special cases of this array, and derive their minimum-redundancy parameters. Finally, we compare the discussed array configurations numerically in \cref{sec:numerical}, before concluding the paper and discussing future work in \cref{sec:conclusions}.

\subsection{Notation}
We denote matrices by bold uppercase, vectors by bold lowercase, and scalars by unbolded letters. Sets are denoted by calligraphic letters. The set of integers from $ a \in \mathbb{Z}$ to $ c\in\mathbb{Z} $ in steps of $ b\in\mathbb{N}_+ $ is denoted $ \{a:b:c\}= \{a,a+b,a+2b,\ldots,c\}$. Shorthand $ \{a:c\}$ denotes $\{a:1:c\} $. The sum of two sets is defined as the set of pairwise sums of elements, i.e., $ \mathcal{A}+\mathcal{B}\triangleq \{a+b\ |\ a\in\mathcal{A};b\in\mathcal{B} \} $. The sum of a set and a scalar is a special case, where either summand is a set with a single element. Similar definitions hold for the difference set. The cardinality of a set $ \mathcal{A} $ is denoted by $ |\mathcal{A}| $. The rounding operator $ \lceil \cdot \rfloor $ quantizes the scalar real argument to the closest integer. Similarly, the ceil operator $ \lceil \cdot \rceil $ yields the smallest integer larger than the argument, and the floor operator $ \lfloor\cdot \rfloor $ yields the largest integer smaller than the argument.

\section{Preliminaries}\label{sec:preliminaries}
In this section, we briefly review the active sensing and sum co-array models. We then define the considered array figure of merit, which are summarized in \cref{tab:fom}.

\subsection{Signal model}\label{sec:signal_model}
Consider a linear array of $ N $ transmitting and receiving sensors, whose normalized positions are given by the set of non-negative integers $ \mathcal{D}\!=\!\{d_n\}_{n=1}^{N}\subseteq\mathbb{N} $. The first sensor of the array is located at $ d_1 = 0 $, and the last sensor at $ d_N=L$, where  $ L=\max \mathcal{D}$ is the (normalized) array aperture. This array is used to actively sense $ K $ far field scatterers with reflectivities $\{\gamma_k\}_{k=1}^K\subseteq \mathbb{C}$ in 
azimuth directions $ \{\varphi_k\}_{k=1}^K \subseteq[-\pi/2, \pi/2]$. Each transmitter illuminates the scattering scene using narrowband radiation in a sequential or simultaneous (orthogonal MIMO) manner \cite{hoctor1992highresolution,li2007mimoradar}. The reflectivities are assumed fixed during the coherence time of the scene, which may consist of one or several snapshots (pulses). The received signal after a single snapshot and matched filtering is then \cite{boudaher2015sparsitybased}
\begin{align}
\bm{x} = (\bm{A}\odot \bm{A})\bm{\gamma} + \bm{n}, \label{eq:z}
\end{align}
where  $ \odot $ denotes the Khatri-Rao (columnwise Kronecker) product, $ \bm{A}\!\in\!\mathbb{C}^{N\times K} $ is the array steering matrix, $ \bm{\gamma}\!=\![\gamma_1,\ldots,\gamma_K]^\txt{T}\!\in\!\mathbb{C}^K$ is the scattering coefficient vector, and $ \bm{n}\!\in\!\mathbb{C}^{N^2} $ is a receiver noise vector following a zero-mean white complex circularly symmetric normal distribution. A typical array processing task is to estimate parameters $\{ \varphi_k, \gamma_k\}_{k=1}^K$, or some functions thereof, from the measurements $ \bm{x} $.

\subsection{Sum co-array}\label{sec:co-array}
The effective steering matrix in \eqref{eq:z} is given by $ \bm{A} \odot \bm{A}$. Assuming ideal omnidirectional sensors, we have
\begin{align*}
	[\bm{A}\odot\bm{A}]_{(n-1)N+m,k}=\exp({j2\pi (d_n+d_m)\delta\sin\varphi_k}),
\end{align*}
 where $ \delta $ is the unit inter-sensor spacing in carrier wavelengths (typically $ \delta=1/2 $). Consequently, the entries of $\bm{A}\odot\bm{A}$ are supported on a virtual array, known as the \emph{sum co-array}, which consists of the pairwise sums of the physical element locations.
\begin{definition}[Sum co-array]\label{def:sca}
	The virtual element positions of the sum co-array of physical array $ \mathcal{D}$ are given by the set
	\begin{align}
	\mathcal{D}_\Sigma \triangleq \mathcal{D}+\mathcal{D} = \{d_n+d_m\ |\ d_n,d_m\in\mathcal{D} \}.\label{eq:sca}
	\end{align}
\end{definition}
The relevance of the sum co-array is that it may have up to $N(N+1)/2 $ unique elements, which is vastly more than the number of physical sensors $ N $. This implies that $\mathcal{O}(N^2)$ coherent scatterers can be identified from \eqref{eq:z}, provided the set of physical sensor positions $ \mathcal{D} $ is judiciously designed. 

The sum co-array is \emph{uniform} or \emph{contiguous}, if it equals a virtual \emph{Uniform linear array} (ULA) of aperture $ 2L=2\max\mathcal{D} $.
\begin{definition}[Contiguous sum co-array]
	The sum co-array of $ \mathcal{D} $ is contiguous if $ \mathcal{D}+\mathcal{D} = \{0:2\max\mathcal{D} \} $.
\end{definition}

A contiguous co-array is desirable for two main reasons. Firstly, it maximizes the number of unique co-array elements for a given physical aperture. Second, it facilitates the use of many array processing algorithms designed for uniform arrays. For example, \emph{co-array MUSIC} \cite{pal2010nested,liu2015remarks} leverages the resulting Vandermonde structure to resolve more sources than sensors unambiguously in the infinite snapshot regime.

A closely related concept to the sum co-array is that of the difference co-array \cite{haubrich1968array,hoctor1990theunifying}. Defined as the set of pairwise element position differences, the difference co-array mainly emerges in passive sensing applications, where the incoherence of the source signals is leveraged. Other assumptions give rise to more exotic co-arrays models, such as the \emph{difference of the sum} co-array\footnote{Up to $\mathcal{O}(N^4)$ \emph{incoherent} scatterers can be resolved by utilizing the second-order statistics of \eqref{eq:z} and the difference of the sum co-array \cite{boudaher2015sparsitybased}.} \cite{chen2008minimumredundancymimo,weng2011nonuniform}, and the \emph{union} of the sum and difference co-array \cite{wang2017doaestimation,si2019improved}, which are not considered herein.

\makeatletter \renewcommand{\@IEEEsectpunct}{\ \,}\makeatother
\subsection{Array figures of merit}\label{sec:fom}
\begin{table}[]
	\centering
	\caption{Frequently used symbols. The set of sensor positions is denoted by $\mathcal{D}$,  the array aperture by $ L=\max\mathcal{D} $, and the number of sensors by $ N=|\mathcal{D}| $.}\label{tab:fom}
	\resizebox{1\linewidth}{!}{%
		\begin{tabular}{c|c|c}
			Symbol&Explanation&Value range\\
			\hline
			$ |\mathcal{D}_\Sigma| $&Number of total DoFs&$ \{N:2L+1\}$\\
			$ H $&Number of contiguous DoFs&$ \{1:|\mathcal{D}_\Sigma|\}$\\
			$ R $&Redundancy&$ [1,\infty)$\\
			$ S(d) $&$ d $-spacing multiplicity&$ \{0:\min(N-1,L-d+1)\}$\\
			$ R_\infty $&Asymptotic redundancy&$ [1,\infty)$\\
			$ F_\infty $&Asymptotic co-array filling ratio&$ [0,1]$\\
			\end{tabular}
	}
\end{table}

\subsubsection{Degrees of freedom (DoF).}\label{sec:dof}
The number of unique elements in the sum co-array $|\mathcal{D}_\Sigma|$ is often referred to as the total number of DoFs. Similarly, the number of virtual elements in the largest contiguous subarray contained in the sum co-array is called the number of contiguous DoFs.
\begin{definition}[Number of contiguous DoFs]\label{def:H}
	The number of contiguous DoFs in the sum co-array of $\mathcal{D}$ is
	\begin{align}
		H\triangleq \argmax _{h,s\in\mathbb{N}}\big\{ h\ |\ s+\{0:h-1\}\subseteq \mathcal{D}+\mathcal{D}\big\}. \label{eq:H}
	\end{align}
\end{definition}
If the offset $ s$ is zero, then $ H $ equals the position of the first hole in the sum co-array. Moreover, if the sum co-array is contiguous, then $ H=2L+1 $, where $ L $ is the array aperture. 

\subsubsection{Redundancy,} \label{sec:R}
$ R $, quantifies the multiplicity of the co-array elements. A non-redundant array achieves $ R\!=\!1 $, whereas $R \!>\!1 $ holds for a redundant array. 
\begin{definition}[Redundancy] \label{def:R}
	The redundancy of an $ N $ sensor array with $ H$ contiguous sum co-array elements is
	\begin{align}
		R \triangleq \frac{N(N+1)/2}{H}. \label{eq:R}
	\end{align}
\end{definition}
The numerator of $ R $ is the maximum number of unique pairwise sums generated by $ N $ numbers. The denominator is given by \eqref{def:H}. \cref{def:R} is essentially \emph{Moffet's} definition of (difference co-array) redundancy \cite{moffet1968minimumredundancy} adapted to the sum co-array. It also coincides with \emph{Hoctor and Kassam's} definition \cite{hoctor1996arrayredundancy}, when the sum co-array is contiguous, i.e., $ H\!=\!2L\!+\!1 $, where $ 2L+1 $ is the aperture of the sum co-array.

\subsubsection{The $d$-spacing multiplicity,} \label{sec:fom_S}
$ S(d) $, enumerates the number of inter-element spacings of a displacement $ d $ in the array \cite{rajamaki2018sparseactive}. For linear arrays, $ S(d) $ simplifies to the \emph{weight} or \emph{multiplicity function} \cite{pal2010nested,rajamaki2017comparison} of the difference co-array (when $ d\geq 1 $).
\begin{definition}[$ d $-spacing multiplicity \cite{rajamaki2018sparseactive}] \label{def:S}
	The multiplicity of inter-sensor displacement $ d \geq 1$ in a linear array $ \mathcal{D} $ is
	\begin{align}
	S(d) \triangleq \frac{1}{2}\sum_{d_n\in \mathcal{D}}\sum_{d_m \in \mathcal{D}}\mathbbm{1}(|d_n-d_m| = d).\label{eq:S}
	\end{align}
\end{definition}
It is easy to verify that $ 0\leq S(d)\leq \min(N-1,L-d+1)$, where $d\in\mathbb{N}_+ $. If the difference co-array is contiguous, then $ S(d)\geq 1$ for $1 \leq d\leq L $.

Typically, a low value for $ S(d) $ is desired for small $ d $, as sensors that are closer to each other tend to interact more strongly and exhibit coupling \cite{friedlander1991direction}. Consequently, the severity of mutual coupling effects deteriorating the array performance may be reduced by decreasing $ S(d) $ \cite{boudaher2016mutualcoupling,liu2016supernested,liu2017hourglass,zheng2019misc}. This simplifies array design, but has its limitations. Specifically $ S(d) $ does not take into account important factors, such as the individual element gain patterns and the mounting platform, as well as the scan angle and the uniformity of the array \cite[Ch.~8]{balanis2016antenna}. Since treating such effects in a mathematically tractable way is challenging, proxies like the \emph{number of unit spacings} $ S(1) $ are sometimes considered instead for simplicity.

\subsubsection{Asymptotic and relative quantities.}
The discussed figures of merit are generally functions of $ N $, $ H $, or $ L $, i.e., they depend on the size of the array. A convenient quantity that holds in the limit of large arrays is the \emph{asymptotic redundancy}
\begin{align}
	R_\infty \triangleq \lim_{N\to\infty} R= \lim_{N\to\infty} \frac{N^2}{2H}.\label{eq:R_inf}
\end{align}
Another one is the \emph{asymptotic sum co-array filling ratio}
	\begin{align}
		F_\infty \triangleq \lim_{N\to\infty}\frac{H}{2L+1},\label{eq:F_inf}
	\end{align}
which satisfies $ 0\!\leq\!F_\infty\!\leq\!1 $, as $ H\!\leq\!|\mathcal{D}_\Sigma|\!\leq\!2L+1 $. If the sum co-array is contiguous, then $ F_\infty\!=\!1 $. Note that the limits in \eqref{eq:R_inf} and \eqref{eq:F_inf} may equivalently be taken with respect to $ L $ or $ H $.

In many cases, we wish to evaluate the relative performance of a given array with respect to a reference array configuration of choice -- henceforth denoted by superscript ``ref''. Of particular interest are the three ratios 
\begin{align*}
	\frac{H}{H^\txt{ref}},\ \frac{N}{N^\txt{ref}}\ \txt{and}\  \frac{L}{L^\txt{ref}},
\end{align*}
or their asymptotic values, when either $ H,N $ or $ L $ is equal for both arrays and approaches infinity. \cref{tab:relative} shows that these asymptotic ratios can be expressed using \eqref{eq:R_inf} and \eqref{eq:F_inf}, provided the respective limits exist. For example, the second row of the first column in \cref{tab:relative} is interpreted as 
\begin{align*}
	\lim_{N\to\infty}\frac{H(N)}{H^\txt{ref}(N)} =\frac{R_\infty^\txt{ref}}{R_\infty},
\end{align*}
where the right-hand-side follows by simple manipulations from \eqref{eq:R}. The expressions in \cref{tab:relative} simplify greatly if both arrays have a contiguous sum co-array, since $ F_\infty=F_\infty^\txt{ref}=1 $.
\begin{table}[]
	\centering
	\caption{Asymptotic ratios of the number of contiguous DoFs $ H $, sensors $ N $, and array aperture $ L $. The variable approaching infinity is assumed equal for both arrays.}\label{tab:relative}
		\begin{tabular}{c|c|c|c}
			&$ H/H^\txt{ref} $&$ N/N^\txt{ref} $&$ L/L^\txt{ref} $\\
			\hline
			$ H\to\infty $&$ 1 $&$ \sqrt{R_\infty/R_\infty^\txt{ref}} $&$F_\infty^\txt{ref}/F_\infty  $\\
			$ N\to\infty $&$R_\infty^\txt{ref}/R_\infty$&$1$&$(R_\infty^\txt{ref}/R_\infty) (F_\infty^\txt{ref}/F_\infty )$\\
			$ L\to\infty $&$ F_\infty/F_\infty^\txt{ref} $&$ \sqrt{R_\infty/R_\infty^\txt{ref}} \sqrt{F_\infty/F_\infty^\txt{ref}} $&$1$\\
		\end{tabular}
\end{table}

\makeatletter \renewcommand{\@IEEEsectpunct}{:\ \,}\makeatother

\section{Minimum-Redundancy Array}\label{sec:MRA}

In this section, we present the sparse array design problem solved by the \emph{Minimum-redundancy array} (MRA). We then review some properties of the MRA, and briefly discuss an extension that is computationally easier to find.

The MRA is defined as the solution to either of two slightly different optimization problems, depending on if the sum co-array is constrained to be contiguous or not \cite{moffet1968minimumredundancy}. 
\begin{definition}[Minimum-redundancy array (MRA)] \label{def:MRA}
	The general Minimum-redundancy array (MRA) solves
	\begin{opteq}
		\underset{\mathcal{D}\subseteq\mathbb{N}; h\in\mathbb{N}}{\txt{maximize}}&\ h\nonumber \\
		\txt{subject to}&\ |\mathcal{D}|=N\ \txt{and}\ \mathcal{D}+\mathcal{D} \supseteq \{0:h-1\}. \label{p:MRA_general}
	\end{opteq}
The restricted MRA (R-MRA) is given by the solution to \eqref{p:MRA_general} with the extra constraint $ h=2\max\mathcal{D}+1$.
\end{definition}
The general MRA minimizes the redundancy $ R $, subject to a given number of sensors $ N $ and offset $s=0$ in \eqref{eq:H}. By \cref{def:R}, this  is equivalent to maximizing $ H $, which by \cref{def:H} reduces to \eqref{p:MRA_general}. In contrast, the R-MRA constrains the sum co-array to be contiguous, and therefore maximizes the array aperture. A more generic definition of the general MRA is possible by including offset $s\in\mathbb{N}$ as an optimization variable in \eqref{p:MRA_general}. Note that the R-MRA implies that $s=0$, regardless of the definition of the MRA. Finding general MRAs with $s\neq0$ is an open problem that is left for future work.

MRAs can, but need not, be restricted \cite{kohonen2014addition,kohonen2014meet}. For example, two of the three MRAs with $ N=8 $ sensors in \cref{tab:MRA_example} are restricted. On the other hand, none of the general MRAs with $ N=11 $ sensors are restricted ($ H=47 $ in the general case, respectively, $ H=45 $ in the restricted case) \cite{kohonen2014addition,kohonen2014meet}. The ``sum MRAs'' in \cref{def:MRA} are equivalent to \emph{extremal additive 2-bases}, which have been extensively studied in number theory \cite{rohrbach1937beitrag,riddell1978someextremal,mossige1981algorithms,klove1981class,kohonen2014meet}, similarly to \emph{difference bases} \cite{leech1956ontherepresentation,wichmann1963anote} corresponding ``difference MRAs'' \cite{moffet1968minimumredundancy}. Solving \eqref{p:MRA_general} is nevertheless challenging, since the size of the search space grows exponentially with the number of elements $ N $. Consequently, MRAs are currently only known for $ N\leq 25 $ \cite{kohonen2014addition} and R-MRAs for $ N\leq48 $ \cite{kohonen2015early}. 

The MRA can also be defined for a fixed aperture. E.g., the restricted MRA would minimize $ N $ subject to a contiguous co-array of length $ 2L+1 $. The difference MRA following this definition is referred to as the \emph{sparse ruler} \cite{shakeri2012directionofarrival}. For a given $ N $, several such rulers of different length may exist. However, we exclusively consider the MRA of \cref{def:MRA} henceforth.

\subsection{Key properties}\label{sec:MRA_properties}
The lack of a closed-form expression for the sensor positions of the MRA make its properties difficult to analyze  precisely. Nevertheless, it is straightforward to see that a linear array with a contiguous sum co-array necessarily contains two sensors in each end of the array, as shown by the following lemma.
\begin{lemma}[Necessary sensors]\label{thm:N_nec}
	Let $ N\geq 2 $. If $ \mathcal{D} $ has a contiguous sum co-array, then $ \mathcal{D}\supseteq \{0,1,L-1,L\} $. If $ \mathcal{D} $ has a contiguous difference co-array, then $ \mathcal{D}\supseteq \{0,1,L\} $. 
\end{lemma}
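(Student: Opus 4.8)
The plan is to read off the required sensors directly from the co-array elements adjacent to the endpoints. By the indexing convention we already have $0 = d_1 \in \mathcal{D}$ and $L = d_N \in \mathcal{D}$, so the only work is to force $1$ (and, in the sum case, $L-1$) into $\mathcal{D}$. The mechanism is the same throughout: a co-array element lying next to $0$ or next to the co-array endpoint can be written as a sum/difference of physical positions in only one or two ways, and each such representation pins down a new sensor. Note that $N \ge 2$ guarantees $L \ge 1$, and the degenerate case $L = 1$ forces $\mathcal{D} = \{0,1\}$ immediately, so below I may assume $L \ge 2$.

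\emph{Sum co-array.} Since $\mathcal{D}+\mathcal{D} = \{0:2L\}$, we have $1 \in \mathcal{D}+\mathcal{D}$, i.e.\ $d_n + d_m = 1$ for some $d_n, d_m \in \mathcal{D}$; as the entries are nonnegative integers, $\{d_n,d_m\} = \{0,1\}$, so $1 \in \mathcal{D}$. Symmetrically, $2L-1 \in \mathcal{D}+\mathcal{D}$ gives $d_n + d_m = 2L-1$ with both summands at most $L = \max\mathcal{D}$; if both were $\le L-1$ the sum would not exceed $2L-2$, so one summand equals $L$ and the other $L-1$, whence $L-1 \in \mathcal{D}$. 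Together with $0,L \in \mathcal{D}$ this gives $\mathcal{D} \supseteq \{0,1,L-1,L\}$, and this step is entirely routine.

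\emph{Difference co-array.} From $\mathcal{D}-\mathcal{D} = \{-L:L\}$ I would take $L-1 \in \mathcal{D}-\mathcal{D}$ and write $d_n - d_m = L-1$ with $0 \le d_m \le d_n \le L$; the only solutions are $(d_n,d_m) = (L-1,0)$ and $(d_n,d_m) = (L,1)$, so either $\{0,L-1\} \subseteq \mathcal{D}$ or $\{1,L\} \subseteq \mathcal{D}$. The main obstacle is converting this dichotomy into the stated conclusion: unlike $\{0,1,L-1,L\}$ in the sum case, the set $\{0,1,L\}$ is \emph{not} fixed by the reflection $\mathcal{D} \mapsto L - \mathcal{D}$, which leaves the difference co-array, the aperture $L$, and the endpoint sensors unchanged while interchanging the two cases above; concretely, $\mathcal{D} = \{0,2,3\}$ has contiguous difference co-array $\{-3:3\}$ yet does not contain $1$. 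Hence the statement is to be read up to this reflection, and I would make that invariance explicit: after replacing $\mathcal{D}$ by $L-\mathcal{D}$ if necessary, we may assume $1 \in \mathcal{D}$, giving $\mathcal{D} \supseteq \{0,1,L\}$. Apart from stating the reflection convention cleanly, all remaining steps are elementary integer arithmetic, so I expect no substantive difficulty.
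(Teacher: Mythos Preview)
Your argument is correct and matches the paper's own proof: both read off the sensors $1$ and $L-1$ from the co-array elements adjacent to the endpoints, and both resolve the difference-co-array dichotomy by invoking the reflection $\mathcal{D}\mapsto L-\mathcal{D}$ (the paper phrases this as ``without loss of generality'' after noting the two cases $\mathcal{D}\supseteq\{0,1,L\}$ or $\mathcal{D}\supseteq\{0,L-1,L\}$). Your version is simply more explicit, including the counterexample $\{0,2,3\}$ that makes the need for the reflection convention transparent.
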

\begin{proof}
	Clearly, $ \mathcal{D}+\mathcal{D} \supseteq \{0,1,2L-1,2L\} $ if and only if $ \mathcal{D}\supseteq \{0,1,L-1,L\} $. Similarly, $ \mathcal{D}-\mathcal{D} \supseteq \{0,1,L-1,L\} $ if and only if $\mathcal{D}\supseteq \{0,1,L\} $ or $\mathcal{D}\supseteq \{0,L-1,L\}$. We may write $ \mathcal{D}\supseteq \{0,1,L\} $ without loss of generality, since any $ \mathcal{D}\supseteq \{0,L-1,L\} $ can be mirrored to satisfy $ L- \mathcal{D}\supseteq \{0,1,L\}$.
\end{proof}
\cref{thm:N_nec} implies that any array with a contiguous sum co-array and $ N\geq 4 $ sensors has at least two sensor pairs separated by a unit inter-element spacing, i.e., $  S(1)\geq2 $. \cref{thm:N_nec} also suggests that the R-MRA achieves redundancy $ R=1 $ in only two cases. These arrays are called \emph{perfect arrays}.
\begin{corollary}[Perfect arrays]
	The only two unit redundancy R-MRAs are $ \{0\} $ and $ \{0,1\} $. All other R-MRAs are redundant.
\end{corollary}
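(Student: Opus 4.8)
The plan is to translate the unit-redundancy condition into an injectivity statement about pairwise sums, and then to show that this injectivity necessarily fails once $N\geq 3$. For an R-MRA the sum co-array is contiguous, hence $\mathcal{D}+\mathcal{D}=\{0:2L\}$, so $H=2L+1=|\mathcal{D}_\Sigma|$ and $R=\frac{N(N+1)/2}{2L+1}$ by \cref{def:R}. Now $N(N+1)/2$ is exactly the number of unordered pairs $\{d_n,d_m\}$ (repetitions allowed) from $\mathcal{D}$, i.e. $\sum_{k=0}^{2L}w(k)$, where $w(k)$ denotes the number of such pairs with $d_n+d_m=k$. Since the sum co-array is contiguous, $w(k)\geq 1$ for each of the $2L+1$ integers $k\in\{0:2L\}$, so $R=1$ holds if and only if $w(k)=1$ for every such $k$, i.e. all $N(N+1)/2$ pairwise sums of $\mathcal{D}$ are distinct. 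For $N=1$ the only array is $\mathcal{D}=\{0\}$, and for $N=2$ \cref{thm:N_nec} gives $\mathcal{D}\supseteq\{0,1,L-1,L\}$ with $|\mathcal{D}|=2$, forcing $L=1$ and $\mathcal{D}=\{0,1\}$; in both cases the pairwise sums are trivially distinct, so $R=1$.

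It then remains to show that no array with $N\geq 3$ sensors and a contiguous sum co-array has all pairwise sums distinct. Since $\mathcal{D}\subseteq\{0:L\}$ has $N$ elements, $L\geq N-1\geq 2$. By \cref{thm:N_nec}, $\{0,1,L-1,L\}\subseteq\mathcal{D}$ (for $L=2$ this set equals $\{0,1,2\}$). The co-array element $L$ then admits the two representations $0+L$ and $1+(L-1)$, and these are distinct unordered pairs: $\{0,L\}=\{1,L-1\}$ would force $1\in\{0,L\}$, i.e. $L=1$, contradicting $L\geq 2$. Hence $w(L)\geq 2$, so the pairwise sums are not all distinct, and by the first paragraph $R=\frac{N(N+1)/2}{2L+1}\geq\frac{2L+2}{2L+1}>1$. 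Together with the $N\leq 2$ computation, this shows $\{0\}$ and $\{0,1\}$ are the only unit-redundancy R-MRAs and all others are redundant.

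I expect no serious obstacle here; the proof is a counting identity combined with \cref{thm:N_nec}. The only point needing care is the small-aperture bookkeeping: the bound $L\geq N-1$ is what guarantees $L\geq 2$ for all $N\geq 3$, which keeps the representations $0+L$ and $1+(L-1)$ genuinely distinct even in the edge case $L=2$ (where $L-1=1$, yet $\{1,1\}\neq\{0,2\}$ as multisets). Alternatively one could split off $N=3$ (where $\mathcal{D}=\{0,1,2\}$ and $2=1+1=0+2$) and use the four distinct elements $0<1<L-1<L$ for $N\geq 4$, but the unified bound $L\geq 2$ is cleaner.
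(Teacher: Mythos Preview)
Your proof is correct and follows essentially the same approach as the paper: both invoke \cref{thm:N_nec} to obtain two representations $L=0+L=1+(L-1)$ of the co-array element $L$, forcing $R>1$. The only difference is bookkeeping: the paper applies this argument for $N\geq 4$ and then lists the R-MRAs for $N\leq 3$ by hand, whereas you use the bound $L\geq N-1\geq 2$ to push the general argument down to $N\geq 3$ and handle $N\leq 2$ directly.
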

\begin{proof}
	By \cref{thm:N_nec}, the element at position $ L $ of the sum co-array can be represented in at least two ways, namely $ L=L+0=(L-1)+1 $. Consequently, if $ N\geq 4 $, then $ R>1 $ must hold. The R-MRAs for $ N\leq 3 $ are $ \{0\}, \{0,1\} $, and $ \{0,1,2\} $. Only the first two of these satisfy $ R=1 $.
\end{proof}
Generally, the redundancy of the MRA is unknown. However, the asymptotic redundancy can be bounded as follows.
\begin{thm}[Asymptotic redundancy of MRA \cite{yu2015anew,kohonen2017animproved,klove1981class,yu2009upper}]\label{thm:R_inf_MRA}
	The asymptotic redundancy of the MRA satisfies
	\begin{align*}
	\txt{General MRA: }& 1.090 < R_\infty  < 1.730\\
	\txt{Restricted MRA: }& 1.190 < R_\infty  < 1.917.
	\end{align*}
\end{thm}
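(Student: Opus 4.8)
The plan is to recognize that the combinatorial quantity maximized in \eqref{p:MRA_general} is, up to bounded corrections that disappear in the limit, exactly the size of an \emph{extremal additive $2$-basis} (in the restricted case, an extremal \emph{restricted} $2$-basis), and then to import the known density estimates for these objects from the cited works. Write $g(N)$ for the largest $h$ such that some $N$-element set $\mathcal{D}\subseteq\mathbb{N}$ satisfies $\{0:h-1\}\subseteq\mathcal{D}+\mathcal{D}$, and $g_{\mathrm r}(N)$ for the largest such $h$ attained under the extra constraint $h=2\max\mathcal{D}+1$; by \cref{def:MRA} these are precisely the values of $H$ achieved by the general MRA and the R-MRA with $N$ sensors. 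These are the standard extremal-basis functions studied in \cite{rohrbach1937beitrag,mossige1981algorithms,klove1981class,kohonen2014meet}, the only discrepancy being the usual off-by-one bookkeeping between ``$\mathcal{D}+\mathcal{D}$ covers $\{0,\dots,h-1\}$'' and ``the largest integer covered by $\mathcal{D}+\mathcal{D}$'', which is an additive $O(1)$.

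First I would reduce the statement to one about these basis functions. Since $H=g(N)$ (resp.\ $g_{\mathrm r}(N)$) and the numerator of \eqref{eq:R} is $N(N+1)/2=\tfrac12 N^2(1+o(1))$, \cref{def:R} gives $R=\frac{N^2}{2H}(1+o(1))$, whence
\begin{align*}
R_\infty \;=\; \frac{1}{2}\Big(\lim_{N\to\infty}\frac{H}{N^2}\Big)^{-1}
\end{align*}
whenever the inner limit exists. Consequently any bound $H\le c N^2$ holding for all $N$ gives $R_\infty\ge 1/(2c)$, and any construction delivering $H\ge c N^2$ for all large $N$ gives $R_\infty\le 1/(2c)$. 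The second step is then purely to quote the four density constants and do the arithmetic: the counting/non-existence bounds of \cite{yu2009upper,yu2015anew} bound $g(N)/N^2$ and $g_{\mathrm r}(N)/N^2$ from above and yield the lower bounds $R_\infty>1.090$ and $R_\infty>1.190$; the explicit dense constructions of \cite{kohonen2017animproved} (general case) and \cite{klove1981class} (restricted case) bound the same ratios from below and yield $R_\infty<1.730$ and $R_\infty<1.917$. Substituting these constants into \eqref{eq:R} reproduces exactly the decimals in the statement.

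The only place real care is needed is in matching conventions, which I see as the main obstacle. Two things must be checked once, explicitly: (i) the dictionary between the paper's ``$s=0$, cover $\{0:h-1\}$'' normalization and the literature's parametrization of $2$-bases --- which element is indexed, whether $0$ must lie in the basis, whether one tallies $|\mathcal{D}|$ or $\max\mathcal{D}$ --- all of which contribute only $O(1)$ additive or $(1+o(1))$ multiplicative corrections and hence leave $R_\infty$ unchanged; and (ii) the fact that $R_\infty$ is written as a genuine limit even though it is not known that $\lim_N H/N^2$ exists for either MRA. Point (ii) is handled by reading the lower bounds as bounds on $\liminf_N R$ --- legitimate because the non-existence bounds of \cite{yu2009upper,yu2015anew} hold for \emph{every} $N$ --- and the upper bounds as bounds on $\limsup_N R$; the latter requires one short interpolation argument, noting that the constructions of \cite{kohonen2017animproved,klove1981class}, though defined natively only for a sparse set of sizes, can be padded (by appending a few sensors) to every sufficiently large $N$ at the cost of only $o(N^2)$ contiguous DoFs. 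Spelling out that padding step, together with the off-by-one dictionary in (i), is essentially all the proof requires beyond the cited references.
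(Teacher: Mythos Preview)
Your proposal is correct and follows essentially the same approach as the paper: both treat the theorem as a direct translation of known density bounds for extremal additive $2$-bases into the redundancy parameter $R_\infty$, citing the same four references for the four bounds. The paper's proof is even terser than yours---it simply records the exact constants $R_\infty\ge 1/0.917$, $R_\infty\le 147/85$, $R_\infty\ge 11/(7+\sqrt{5})$, and $R_\infty\le 23/12$ with pointers to the specific theorems, and does not spell out the convention-matching, the $\liminf/\limsup$ reading, or the padding argument you describe; your additional care on these points is legitimate but not something the paper itself undertakes.
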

\begin{proof}
	In the general case, the lower bound $ R_\infty\geq \frac{1}{0.917} > 1.090 $ follows directly from \cite[Theorem 1.1]{yu2015anew}, and the upper bound $R_\infty\leq \frac{147}{85} < 1.730 $ from \cite[Eq.~(1)]{kohonen2017animproved}. Similarly, in the restricted case $ R_\infty\geq \frac{11}{7+\sqrt{5}} > 1.190 $ \cite[Theorem 1.2]{yu2009upper}, and $R_\infty\leq \frac{23}{12} < 1.917 $ \cite[Theorem, p.~177]{klove1981class}.
\end{proof}
\cref{thm:R_inf_MRA} suggests that the R-MRA may be more redundant than the general MRA that does not constrain the sum co-array to be contiguous. The restricted definition of the MRA is nevertheless more widely adopted than the general one for the reasons listed in \cref{sec:co-array}. We note that difference bases or MRAs can also be of the general or restricted type \cite{leech1956ontherepresentation,moffet1968minimumredundancy}. Difference MRAs are typically less redundant than sum MRAs due to the commutativity of the sum, i.e., $ a+b=b+a $, but $ a-b\neq b-a $.

\subsection{Unique solution with fewest closely spaced sensors}\label{sec:MRA_unique}
Problem \eqref{p:MRA_general} may have several solutions for a given $ N $, which means that the MRA is not necessarily unique \cite{kohonen2014addition,kohonen2014meet}. In order to guarantee uniqueness, we introduce a secondary optimization criterion. In particular, we consider the MRA with the \emph{fewest closely spaced sensors}. This MRA is found by minimizing a weighted sum of $ d $-spacing multiplicities (see \cref{def:S}) among the solutions to \eqref{p:MRA_general}, which is equivalent to subtracting a regularizing term from the objective function. This regularizer $\varsigma\geq0 $ can be defined as, for example,
\begin{align}
\varsigma\triangleq \sum_{d=1}^LS(d)10^{-d(\lfloor \log L\rfloor + 1)}, \label{eq:varsigma}
\end{align}
where $ L \in\mathbb{N}_+$ is the largest aperture of the considered solutions. Consequently, any two solutions to the unregularized problem, say $ \mathcal{D}_a $ and $ \mathcal{D}_b $, satisfy $ \varsigma_a > \varsigma_b$, if and only if $ S_a(n)>S_b(n) $ and $ S_a(d)=S_b(d)$ for all $ 1\leq d< n$. In words: \eqref{eq:varsigma} promotes large sensor displacements by prioritizing the value of $ S(1) $, then $ S(2) $, then $ S(3) $, etc. For example, \cref{tab:MRA_example} shows two R-MRAs with equal $ S(1) $ and $ S(2) $, but different $ S(3) $. The R-MRA with the smaller $S(3) $, and therefore lower value of $ \varsigma $,  is preferred.
\begin{table}[]
	\centering
	\caption{MRAs with $ N=8 $ sensors \cite{kohonen2014addition}. The bottom MRA has the fewest unit spacings. Of the two restricted MRAs, the first has fewer closely spaced sensors and hence a lower $\varsigma$.}\label{tab:MRA_example}
	\resizebox{1\linewidth}{!}{%
	\begin{tabular}{c|c|c|c|c|c}
		Configuration&Restricted&$ S(1) $&$ S(2) $&$ S(3) $&$ \varsigma $\\
		\hline
		$ \{0,1,2,5,8,11,12,13\} $&\ding{51}&$ 4 $&$ 2 $&$ 3 $&$ 0.040203\ldots $\\
		$ \{0,1,3,4,9,10,12,13\} $&\ding{51}&$ 4 $&$ 2 $&$ 4 $&$ 0.040204\ldots $\\
		$ \{0,1,3,5,7,8,17,18\} $&\ding{55}&$ 3 $&$ 3 $&$ 2 $&$ 0.030302\ldots $\\
	\end{tabular}%
}
\end{table}

\subsection{Symmetry and the Reduced Redundancy Array}
Many of the currently known sum MRAs are actually symmetric \cite{kohonen2014addition,kohonen2014meet}. In fact, there exist at least one symmetric R-MRA for each $ N\leq 48 $. Moreover, the R-MRA with the fewest closely spaced sensors (lowest $\varsigma$) turns out to be symmetric for $ N\leq 48 $. Indeed, symmetry seems to arise naturally from the additive problem structure (cf. \cite[Section~4.5.3]{kohonen2015exact}). Note that difference MRAs are generally asymmetric \cite{moffet1968minimumredundancy,ishiguro1980minimum}. 

Imposing symmetry on the array design problem has the main advantage of reducing the size of the search space \cite{mossige1981algorithms,haupt1994thinned}. In case of the MRA, this can be achieved by adding constraint $ \mathcal{D}= \max \mathcal{D}-\mathcal{D} $ to \eqref{p:MRA_general}. Unfortunately, the search space of this symmetric MRA still scales exponentially with $ N $. Fortunately, another characteristic of the MRA may readily be exploited. Namely, MRAs tend to have a sparse mid section consisting of uniformly spaced elements \cite{kohonen2014meet}. The \emph{reduced redundancy array} (RRA) extends the aperture of the MRA by adding extra elements to this uniform mid section \cite{ishiguro1980minimum,hoctor1996arrayredundancy}.
\begin{definition}[Reduced redundancy array (RRA) \cite{hoctor1996arrayredundancy}]
	The sensor positions of the RRA for a given MRA are given by
	\begin{align*}
	\mathcal{D}_\txt{RRA} \triangleq \mathcal{P} \cup (\mathcal{M}+\max \mathcal{P})\cup (\mathcal{S}+\max \mathcal{P}+\max \mathcal{M}),
	\end{align*}
	where $ \mathcal{P} $ is the prefix and $ \mathcal{S} $ the suffix of the MRA, and 
	\begin{align*}
	\mathcal{M} =	\{0:M:(N-|\mathcal{P}|-|\mathcal{S}|+1)M \}
	\end{align*}
	is the mid subarray, with inter-element spacing $ M \in\mathbb{N}_+$.
\end{definition}

The prefix $ \mathcal{P} $, suffix $ \mathcal{S} $, and the inter-element spacing $ M $ of $ \mathcal{M} $ are determined by the \emph{generator} MRA, i.e., the MRA that is extended. For example, an MRA with $ N=7 $ sensors is 
\begin{align*}
\underbrace{\{0, 1, 2, 5, 8, 9, 10\}}_{\mathcal{D}_\txt{MRA}}  = \underbrace{\{0, 1, 2\} }_{\mathcal{P}}\cup\underbrace{\{2:3:8\}}_{\mathcal{M}+2}\cup \underbrace{\{8, 9, 10\}}_{\mathcal{S}+8}.
\end{align*}
Note that $ \mathcal{S}= \max \mathcal{P}-\mathcal{P} $ holds, if the MRAs is symmetric as above.  The RRA also has a contiguous sum co-array, if the MRA is restricted.

The RRA has a low redundancy when $ |\mathcal{D}_\txt{RRA}|\approx|\mathcal{D}_\txt{MRA}| $. However, the redundancy of the RRA approaches infinity as $ |\mathcal{D}_\txt{RRA}|$ grows, since the aperture of the RRA only increases linearly with the number of sensors. Consequently, we will next consider a general class of symmetric arrays which scale better and admit a solution in polynomial time, provided the design space is constrained judiciously. In particular, we will show that this class of symmetric arrays naturally extends many of the established array configurations -- designed for passive sensing -- to the active sensing setting. 

\section{Symmetric array --- general design and conditions for contiguous co-array} \label{sec:symmetric}
In this section, we establish a general framework for symmetric arrays with a contiguous sum (and difference) co-array. The proposed \emph{symmetric array with generator $\mathcal{G}$} (S-$\mathcal{G} $) is constructed by taking the union of a generator array\footnote{This terminology is adopted from the literature on fractal arrays \cite{werner2003anoverview,cohen2019sparsefractal}.} $\mathcal{G}$ and its mirror image shifted by a non-negative integer $ \lambda $. The generator, which is the basic construction block of the symmetric array, can be any array configuration of choice.
\begin{definition}[Symmetric array with generator $ \mathcal{G} $ (S-$\mathcal{G}$)]\label{def:SA}
	The sensor positions of the S-$\mathcal{G}$ are given by
	\begin{align}
	\mathcal{D}_{\txt{S-}\mathcal{G}} \triangleq\mathcal{G}\cup (\max \mathcal{G}-\mathcal{G}+\lambda),\label{eq:SA}
	\end{align}
	where $\mathcal{G}$ is a generator array and $\lambda\in \mathbb{N}$ is a shift parameter\footnote{Note that $ \lambda \leq 0 $ is equivalent to considering the mirrored generator $ \max{\mathcal{G}}-{\mathcal{G} }$ for $ \lambda \geq 0 $. Therefore, we may set $ \lambda \geq 0$ without loss of generality.}.
\end{definition}
\cref{fig:S-G_array} shows a schematic of the S-$\mathcal{G} $. Note that the number of sensors satisfies $ |\mathcal{G}|\leq N\leq 2|\mathcal{G}|$,
depending on the overlap between $ \mathcal{G} $ and $L- \mathcal{G} $. The aperture $ L $ of the S-$\mathcal{G}$ is
\begin{align}
	L= \max \mathcal{G} + \lambda. \label{eq:SA_L}
\end{align}
The exact properties of the S-$\mathcal{G}$ are determined by the particular choice of $ \mathcal{G} $ and $\lambda$. Specifically, the generator array $ \mathcal{G} $ influences how easily the symmetrized array S-$\mathcal{G}$ can be optimized to yield a large contiguous sum co-array. For example, a generator with a simple nested structure makes a convenient choice in this regard, as we will demonstrate in \cref{sec:generators}. Next, however, we establish necessary and sufficient conditions that any $\mathcal{G}$ and $\lambda$ must fulfill for S-$\mathcal{G} $ to have a contiguous sum co-array. These general conditions are later utilized when considering specific choices for $ \mathcal{G} $ in \cref{sec:generators}.

\subsection{Necessary and sufficient conditions for contiguous co-array}
\cref{fig:S-G_coarray} illustrates the difference co-array of the S-$\mathcal{G}$, which is composed of the difference co-array and shifted sum co-arrays of the generator $\mathcal{G}$. By symmetry, the sum and difference co-array of the S-$ \mathcal{G} $ are equivalent up to a shift. This fact, along with \eqref{eq:SA}, allows us to express the necessary and sufficient condition for the contiguity of both co-arrays in terms of the generator $\mathcal{G}$ and shift parameter $ \lambda $. Moreover, we may conveniently decompose the condition into two simpler subconditions, as shown by the following theorem.
\begin{thm}[Conditions for contiguous co-array] \label{thm:SA_coarray_c}
	The sum (and difference) co-array of the S-$\mathcal{G}$ is contiguous if and only~if
	\begin{enumerate}[label=(C\arabic*),leftmargin=1.0cm]
		\item $ (\mathcal{G}-\mathcal{G})\cup (\mathcal{G}+\mathcal{G}-L)\cup  (L-(\mathcal{G}+\mathcal{G}))\supseteq \{0:\max\mathcal{G}\}$\label{c:gmg}
	\item $ \mathcal{G}+\mathcal{G}\supseteq \{0:\lambda-1\}, $ \label{c:gpg}
	\end{enumerate}
	where $ L $ is the array aperture given by \eqref{eq:SA_L}.
\end{thm}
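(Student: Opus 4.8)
The plan is to collapse both co-arrays into a single set and then exploit symmetry to reduce the contiguity requirement to two disjoint sub-requirements matching \ref{c:gmg} and \ref{c:gpg}. Write $\mathcal{D}\triangleq\mathcal{D}_{\txt{S-}\mathcal{G}}=\mathcal{G}\cup(L-\mathcal{G})$, with $L=\max\mathcal{G}+\lambda$ by \eqref{eq:SA_L}. First I would expand the sum co-array, $\mathcal{D}+\mathcal{D}=(\mathcal{G}+\mathcal{G})\cup(L+(\mathcal{G}-\mathcal{G}))\cup(2L-(\mathcal{G}+\mathcal{G}))$, and the difference co-array, $\mathcal{D}-\mathcal{D}=(\mathcal{G}-\mathcal{G})\cup(\mathcal{G}+\mathcal{G}-L)\cup(L-(\mathcal{G}+\mathcal{G}))$; these are related by $\mathcal{D}+\mathcal{D}=L+(\mathcal{D}-\mathcal{D})$, so the sum co-array equals $\{0:2L\}$ iff the difference co-array equals $\{-L:L\}$, which settles the ``(and difference)'' part of the claim and lets me work with the single set $\mathcal{C}\triangleq\mathcal{D}-\mathcal{D}=(\mathcal{G}-\mathcal{G})\cup(\mathcal{G}+\mathcal{G}-L)\cup(L-(\mathcal{G}+\mathcal{G}))$. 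Note that $\mathcal{C}$ is exactly the left-hand side of \ref{c:gmg} and that $\mathcal{C}=-\mathcal{C}$, since $\mathcal{G}-\mathcal{G}$ is symmetric about $0$ and $L-(\mathcal{G}+\mathcal{G})=-(\mathcal{G}+\mathcal{G}-L)$.

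Next, since $\max\mathcal{G}\le L$ one checks that each of the three constituents of $\mathcal{C}$ is contained in $\{-L:L\}$, hence $\mathcal{C}\subseteq\{-L:L\}$, and the co-array is contiguous iff $\{0:L\}\subseteq\mathcal{C}$. Because $\mathcal{C}=-\mathcal{C}$, this is equivalent to the conjunction of $\{0:\max\mathcal{G}\}\subseteq\mathcal{C}$ and $\{\max\mathcal{G}+1:L\}\subseteq\mathcal{C}$. The first inclusion is verbatim \ref{c:gmg}. For the second, the key observation is that on the outermost range $\{\max\mathcal{G}+1:L\}$ only the term $L-(\mathcal{G}+\mathcal{G})$ of $\mathcal{C}$ can reach: $\mathcal{G}-\mathcal{G}\subseteq\{-\max\mathcal{G}:\max\mathcal{G}\}$, while $\mathcal{G}+\mathcal{G}-L\subseteq\{-L:2\max\mathcal{G}-L\}$ and $2\max\mathcal{G}-L=\max\mathcal{G}-\lambda\le\max\mathcal{G}$, so neither contributes an integer larger than $\max\mathcal{G}$. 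Therefore $\{\max\mathcal{G}+1:L\}\subseteq\mathcal{C}$ iff $\{\max\mathcal{G}+1:L\}\subseteq L-(\mathcal{G}+\mathcal{G})$, and applying the reflection $z\mapsto L-z$, which maps $\{\max\mathcal{G}+1:L\}$ bijectively onto $\{0:\lambda-1\}$ because $L-\max\mathcal{G}-1=\lambda-1$, turns this into $\{0:\lambda-1\}\subseteq\mathcal{G}+\mathcal{G}$, i.e., \ref{c:gpg}. Conjoining the two inclusions yields the stated equivalence; the degenerate case $\lambda=0$ is consistent, since then $\{\max\mathcal{G}+1:L\}$ and $\{0:\lambda-1\}$ are both empty and \ref{c:gpg} holds vacuously.

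The main obstacle is bookkeeping with reflections and shifts: getting the three pieces of $\mathcal{C}$ to align with \ref{c:gmg}, verifying $\mathcal{C}\subseteq\{-L:L\}$, and --- the crux --- arguing that on $\{\max\mathcal{G}+1:L\}$ the set $\mathcal{C}$ is controlled solely by $L-(\mathcal{G}+\mathcal{G})$, i.e., that neither the generator's difference co-array nor the ``inner half'' of its sum co-array can reach into the last $\lambda$ positions below the aperture. Everything else is the short symmetry reduction from ``cover $\{-L:L\}$'' to ``cover $\{0:L\}$'' and its split into the two sub-inclusions, which must be justified using $\mathcal{C}=-\mathcal{C}$ (this symmetry of the co-array, as opposed to the geometric symmetry $L-\mathcal{D}=\mathcal{D}$, is what makes the clean $\{0:\max\mathcal{G}\}$/$\{\max\mathcal{G}+1:L\}$ decomposition possible).
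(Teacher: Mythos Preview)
Your proposal is correct and follows essentially the same route as the paper: expand the difference co-array of $\mathcal{D}_{\txt{S-}\mathcal{G}}$ into $(\mathcal{G}-\mathcal{G})\cup(\mathcal{G}+\mathcal{G}-L)\cup(L-(\mathcal{G}+\mathcal{G}))$, use symmetry to reduce contiguity to $\mathcal{C}\supseteq\{0:L\}$, and then split at $\max\mathcal{G}$ using $L=\max\mathcal{G}+\lambda$. Your write-up is in fact more careful than the paper's, which simply asserts that \ref{c:gmg} and \ref{c:gpg} ``directly follow'' from \eqref{eq:SA_L}; you supply the missing justification for the ``only if'' direction of \ref{c:gpg} by checking that neither $\mathcal{G}-\mathcal{G}$ nor $\mathcal{G}+\mathcal{G}-L$ can reach into $\{\max\mathcal{G}+1:L\}$.
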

\begin{proof}
	By symmetry of the physical array, the sum co-array is contiguous if and only if the difference co-array is contiguous (e.g., see \cite[Lemma~1]{rajamaki2018sparseactive}), that is,
	\begin{align*}
		\mathcal{D}_{\txt{S-}\mathcal{G}}+\mathcal{D}_{\txt{S-}\mathcal{G}} =\{0:2L\} \iff  \mathcal{D}_{\txt{S-}\mathcal{G}}-\mathcal{D}_{\txt{S-}\mathcal{G}} =\{-L:L\}.
	\end{align*}
	By symmetry of the difference co-array, this is equivalent to requiring that 
	$ \mathcal{D}_{\txt{S-}\mathcal{G}}\!-\!\mathcal{D}_{\txt{S-}\mathcal{G}}\!\supseteq\!\{0:L\} $, or using \eqref{eq:SA} that
	\begin{align*}
	\mathcal{D}_{\txt{S-}\mathcal{G}}-\mathcal{D}_{\txt{S-}\mathcal{G}} &=\mathcal{G}\cup (L-\mathcal{G})-\mathcal{G}\cup (L-\mathcal{G})\\
	&=	(\mathcal{G}-\mathcal{G})\cup(\mathcal{G}+\mathcal{G}-L)\cup(L-(\mathcal{G}+\mathcal{G}))\\
	&\supseteq \{0:L\}.
	\end{align*}
	Conditions \ref{c:gpg} and \ref{c:gmg} then directly follow from \eqref{eq:SA_L}.
\end{proof}
Note that \ref{c:gpg} may also be reformulated as $ \lambda\leq H$, where $ H $ denotes the position of the first hole in the sum co-array of $\mathcal{G}$, per \eqref{eq:H}. Since $ H\leq 2\max\mathcal{G}+1 $, the sum co-array of $\mathcal{D}_{\txt{S-}\mathcal{G}}$ is contiguous only if $ \lambda $ satisfies
\begin{align*}
\lambda \leq 2\max \mathcal{G}+1.
\end{align*}

\subsection{Sufficient conditions for contiguous co-array}
It is also instructive to consider some simple sufficient conditions for satisfying \ref{c:gmg} and \ref{c:gpg}, as outlined in the following two corollaries to \cref{thm:SA_coarray_c}.
\begin{corollary}[Sufficient conditions for \ref{c:gmg}]\label{thm:c_suff_gmg}
	Condition~\ref{c:gmg} is satisfied if either of the following holds:
	\begin{enumerate}[label=(\roman*)]
		\item$ \mathcal{G} $ has a contiguous difference co-array.\label{gmg_1}
		\item $ \mathcal{G} $ has a contiguous sum co-array and $\lambda\leq \max\mathcal{G}+1$.\label{gmg_2}
	\end{enumerate}	
\end{corollary}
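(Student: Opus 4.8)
The plan is to treat the two sufficient conditions separately, reducing each to elementary set arithmetic. Throughout I would abbreviate $g\triangleq\max\mathcal{G}$ and recall from \eqref{eq:SA_L} that $L=g+\lambda$, so the target set on the right-hand side of \ref{c:gmg} is simply $\{0:g\}$.

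For case \ref{gmg_1}, I would just note that a contiguous difference co-array means $\mathcal{G}-\mathcal{G}=\{-g:g\}\supseteq\{0:g\}$. Hence the first term alone in the union on the left-hand side of \ref{c:gmg} already contains $\{0:\max\mathcal{G}\}$, so \ref{c:gmg} holds for every admissible $\lambda$, and no further work is needed.

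For case \ref{gmg_2}, I would invoke the hypothesis $\mathcal{G}+\mathcal{G}=\{0:2g\}$ and evaluate the reflected sum term, $L-(\mathcal{G}+\mathcal{G})=\{L-2g:L\}=\{\lambda-g:\lambda+g\}$. Since $0\le\lambda\le g+1$, the endpoints obey $\lambda-g\le 1$ and $\lambda+g\ge g$, so this progression contains $\{1:g\}$. The single remaining point $0$ always lies in $\mathcal{G}-\mathcal{G}$ (any element of $\mathcal{G}$ minus itself), so the union of the first and third terms already covers $\{0:g\}=\{0:\max\mathcal{G}\}$, which establishes \ref{c:gmg}.

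I do not anticipate a real obstacle here: the whole argument is bookkeeping with arithmetic progressions. The one point deserving care — and the reason the bound in \ref{gmg_2} reads $\lambda\le\max\mathcal{G}+1$ rather than $\lambda\le\max\mathcal{G}$ — is the endpoint $\lambda=g+1$, where $L-(\mathcal{G}+\mathcal{G})=\{1:2g+1\}$ misses $0$ and the difference-co-array term is genuinely needed to supply it; for every $\lambda\le g$ the reflected sum term by itself already contains all of $\{0:g\}$.
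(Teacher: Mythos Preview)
Your argument is correct and essentially mirrors the paper's: case \ref{gmg_1} is identical, and for case \ref{gmg_2} the paper likewise computes the shifted sum co-arrays explicitly and uses $0\in\mathcal{G}-\mathcal{G}$ to fill the gap at $\lambda=\max\mathcal{G}+1$. The only cosmetic difference is that the paper invokes both $\mathcal{G}+\mathcal{G}-L$ and $L-(\mathcal{G}+\mathcal{G})$ to cover the full symmetric range $\{-L:L\}$, whereas you more economically observe that the single reflected term $L-(\mathcal{G}+\mathcal{G})=\{\lambda-g:\lambda+g\}$ already contains $\{1:g\}$, which is all that \ref{c:gmg} actually requires.
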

\begin{proof}
	Firstly, if the difference co-array of $ \mathcal{G}$ is contiguous, then $ \mathcal{G} - \mathcal{G}\supseteq \{0:\max\mathcal{G}\} $ holds by definition, implying \ref{c:gmg}. Secondly, if the sum co-array is contiguous, then $ \mathcal{G} + \mathcal{G}\supseteq \{0:2\max\mathcal{G}\} $ holds, which implies that
	\begin{align*}
			\mathcal{G}+\mathcal{G}-L &= \{-L:\max\mathcal{G}-\lambda\}\\
		L-(\mathcal{G}+\mathcal{G}) &= \{-\max\mathcal{G}+\lambda:L\}.
	\end{align*}
	 If $ \lambda\leq \max\mathcal{G} +1$, then the union of these two sets and $\mathcal{G}-\mathcal{G}$ covers $ \{-L:L\}\supseteq\{0:\max\mathcal{G}\} $, implying \ref{c:gmg}.
\end{proof}
\begin{corollary}[Sufficient conditions for \ref{c:gpg}]\label{thm:c_suff_gpg}
	Condition \ref{c:gpg} is satisfied if any of the following hold:
	\begin{enumerate}[label=(\roman*)]
		\item $ \lambda \leq 1 $ \label{gpg_1}
		\item $ \lambda \leq 3 $ and $ |\mathcal{G}|\geq 2 $ \label{gpg_2}
		\item $ \lambda \leq 2\max\mathcal{G}+1 $ and $ \mathcal{G} $ has a contiguous sum co-array.\label{gpg_3}
	\end{enumerate}
\end{corollary}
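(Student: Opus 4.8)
The plan is to treat the three cases one at a time, in each reducing condition~\ref{c:gpg} --- namely $\mathcal{G}+\mathcal{G}\supseteq\{0:\lambda-1\}$ --- to an elementary membership check in the sum co-array $\mathcal{G}+\mathcal{G}$. Two standing facts about the generator will be used: first, $\mathcal{G}$ is normalized so that $0\in\mathcal{G}$, hence $0=0+0\in\mathcal{G}+\mathcal{G}$; second, whenever $\mathcal{G}$ has at least two sensors it begins with the closely spaced pair $\{0,1\}$, as is the case for all generators of interest (and as \cref{thm:N_nec} forces as soon as $\mathcal{G}$ itself has a contiguous sum or difference co-array).

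For \ref{gpg_1}, $\lambda\le1$: the set $\{0:\lambda-1\}$ is empty when $\lambda=0$ and is $\{0\}$ when $\lambda=1$, and in both cases it lies in $\mathcal{G}+\mathcal{G}$ since $0\in\mathcal{G}+\mathcal{G}$. For \ref{gpg_2}, $\lambda\le3$ and $|\mathcal{G}|\ge2$: now $\{0:\lambda-1\}\subseteq\{0,1,2\}$, and using $\{0,1\}\subseteq\mathcal{G}$ we get $0=0+0$, $1=0+1$, $2=1+1$, so $\{0,1,2\}\subseteq\mathcal{G}+\mathcal{G}$, which gives~\ref{c:gpg}. For \ref{gpg_3}, $\lambda\le2\max\mathcal{G}+1$ with $\mathcal{G}+\mathcal{G}$ contiguous: by definition $\mathcal{G}+\mathcal{G}=\{0:2\max\mathcal{G}\}$, and the hypothesis gives $\lambda-1\le2\max\mathcal{G}$, so $\{0:\lambda-1\}\subseteq\{0:2\max\mathcal{G}\}=\mathcal{G}+\mathcal{G}$. (Equivalently, the remark following \cref{thm:SA_coarray_c} recasts \ref{c:gpg} as $\lambda\le H$, where $H$ is the position of the first hole of $\mathcal{G}+\mathcal{G}$; contiguity forces $H=2\max\mathcal{G}+1$.)

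I expect no genuine computational difficulty here --- every step is definition-chasing. The one point that needs care is case \ref{gpg_2}: the hypothesis ``$|\mathcal{G}|\ge2$'' must be read together with the normalization that a generator starts with $\{0,1\}$, for without a sensor at position~$1$ the set $\{0,1,2\}$ need not be contained in $\mathcal{G}+\mathcal{G}$ (e.g.\ $\mathcal{G}=\{0,2\}$). Cases \ref{gpg_1} and \ref{gpg_3} are immediate from $0\in\mathcal{G}$ and the definition of a contiguous sum co-array, respectively.
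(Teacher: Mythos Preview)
Your proof is correct and matches the paper's approach case by case: $0\in\mathcal{G}$ for \ref{gpg_1}, $\{0,1\}\subseteq\mathcal{G}$ via \cref{thm:N_nec} for \ref{gpg_2}, and the definition of a contiguous sum co-array for \ref{gpg_3}. Your caution about case~\ref{gpg_2} is well placed --- the paper likewise invokes \cref{thm:N_nec} to obtain $\{0,1\}\subseteq\mathcal{G}$, which strictly requires $\mathcal{G}$ to have a contiguous (sum or difference) co-array rather than merely $|\mathcal{G}|\ge 2$, so the implicit reading you flag is exactly the one the paper adopts.
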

\begin{proof}
	Firstly, $ \lambda \leq 1 $ implies \ref{c:gpg}, since $ \mathcal{G}\supseteq\{0\} $. Secondly, if $ |\mathcal{G}| \geq 2$, then $\mathcal{G}\supseteq\{0,1\}$ holds by \cref{thm:N_nec}. Consequently, $ |\mathcal{G}| \geq 2 $ and $ \lambda \leq 3  \implies$ \ref{c:gpg}. Thirdly, if $ \mathcal{G} $ has a contiguous sum co-array and $ \lambda\leq  2\max\mathcal{G}+1$, then $ \mathcal{G}\!+\!\mathcal{G}\!=\!\{0:2\max\mathcal{G}\}\supseteq\{0:\lambda\!-\!1\}$ holds by definition.
\end{proof}
We note that if $ |\mathcal{G}|\geq 2 $, then $ \lambda\leq \max\mathcal{G}+2 $ is actually sufficient in \cref{gmg_2} of \cref{thm:c_suff_gmg} (cf. \cref{thm:N_nec}). This is also sufficient for satisfying \ref{c:gpg} by \cref{gpg_3} of \cref{thm:c_suff_gpg}. \cref{gmg_1} of \cref{thm:c_suff_gmg} is of special interest, since it states that any $\mathcal{G}$ with a contiguous difference co-array satisfies \ref{c:gmg}. This greatly simplifies array design, as shown in the next section, where we develop two arrays leveraging this property.

\begin{figure*}[]
	\centering
	\subfigure[Physical array and constituent sub-arrays]{\includegraphics[width=0.3\textwidth]{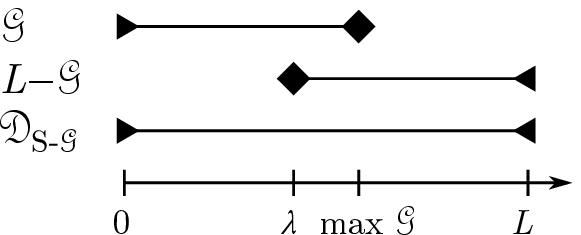}\label{fig:S-G_array}}
	\hfil
	\subfigure[Difference co-array (same as sum co-array but shifted by $ -L $) and constituent sub-co-arrays]{\includegraphics[width=0.6\textwidth]{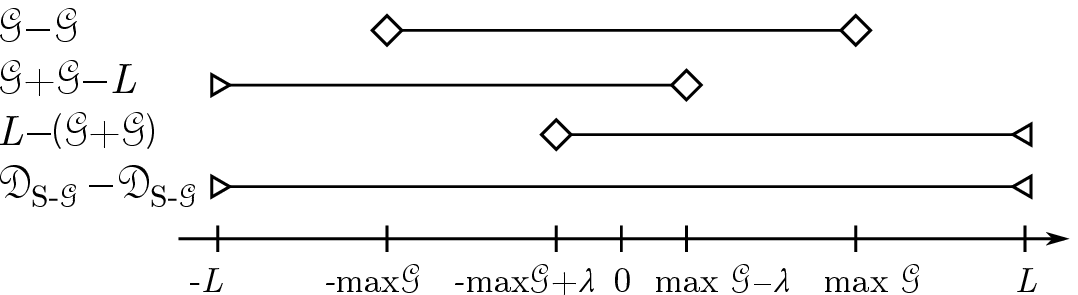}\label{fig:S-G_coarray}}	
	\caption{The symmetric array S-$ \mathcal{G} $ in (a) consists of the union of a generator array $ \mathcal{G} $ and its mirror image shifted by $\lambda\geq0$. The difference co-array of the S-$ \mathcal{G} $ in (b) can be decomposed into the union of the difference co-array of $ \mathcal{G} $, and two shifted copies of the sum co-array of $\mathcal{G}$ (of which one is also mirrored). Due to symmetry, the difference and sum co-array of the S-$ \mathcal{G} $ are equivalent up to a shift of $ L $, and contiguous only if $ \lambda\leq2\max\mathcal{G}+1 $.}
\end{figure*}

\section{Low-redundancy symmetric array designs using parametric generators}\label{sec:generators}

Similarly to the R-MRA in \eqref{p:MRA_general}, we wish to find the S-$\mathcal{G} $ with maximal aperture 
satisfying the conditions in \cref{thm:SA_coarray_c}. Given a number of elements $ N $, and a class of generator arrays $ \mathscr{G} $, such that $\mathcal{G} \in\mathscr{G} $, this minimum-redundancy  S-$\mathcal{G} $ design is found by solving the following optimization problem:
\begin{opteq}
	\underset{\mathcal{G}\in\mathscr{G}, \lambda \in \mathbb{N}}{\text{maximize}}&\ \max \mathcal{G}+\lambda\nonumber \\
	\text{subject to}&\ |\mathcal{G}\cup (\max \mathcal{G}-\mathcal{G}+\lambda)|=N\nonumber\\
	&\ \txt{\ref{c:gmg} and \ref{c:gpg}}. \label{p:SA}
\end{opteq}
In general, \eqref{p:SA} is a non-convex problem, whose difficulty depends on the choice of $\mathscr{G}$. Solving \eqref{p:SA} may therefore require a grid search over $ \lambda $ and the elements of $ \mathscr{G} $, which can have exponential complexity at worst. At best, however, a solution can be found in polynomial time, or even in closed form.

We will now focus on a family of choices for $\mathscr{G}$, such that each $ \mathcal{G}\in\mathscr{G} $ has the following two convenient properties:
\begin{enumerate}[label=(\alph*)]
	\item $\mathcal{G}$ has a contiguous difference co-array\label{i:G_diff_cont}
	\item $\mathcal{G}$ has a closed-form expression for its aperture.\label{i:G_param}
\end{enumerate} 
\cref{i:G_diff_cont} greatly simplifies \eqref{p:SA} by directly satisfying condition \ref{c:gmg}, whereas \cref{i:G_param} enables the straightforward optimization of the array parameters. Condition \ref{c:gpg} is typically easy to satisfy for an array with these two properties. This is the case with many sparse array configurations in the literature, such as the Nested \cite{pal2010nested} and Wichmann Array \cite{wichmann1963anote,pearson1990analgorithm,linebarger1993difference}. Constructing a symmetric array using a generator with a contiguous difference co-array is thus a practical way to synthesize a contiguous sum co-array.

\subsection{Symmetric Nested Array (S-NA)}
The \emph{Nested Array} (NA) \cite{pal2010nested} is a natural choice for a generator satisfying the previously outlined properties~\ref{i:G_diff_cont} and \ref{i:G_param}. The NA has a simple structure consisting of the union of a dense ULA, $\mathcal{D}_1$, and sparse ULA sub-array, $\mathcal{D}_2$, as shown in \cref{fig:G_NA}. When used as the generator $\mathcal{G}$ in \eqref{eq:SA}, the NA yields the \emph{Symmetric Nested Array} (S-NA), defined as follows:
\begin{definition}[Symmetric Nested Array (S-NA)] \label{def:S-NA}
	The sensor positions of the S-NA are given by \eqref{eq:SA}, where 
	\begin{align*}
	\mathcal{G} = \mathcal{D}_1 \cup (\mathcal{D}_2+N_1),
	\end{align*}
	with $\mathcal{D}_1 = \{0:N_1-1\};$  $\mathcal{D}_2=\{0:N_1+1:(N_2-1)(N_1+1)\}$; and array parameters $ N_1,N_2 \in\mathbb{N}$.
\end{definition}
Special cases of the S-NA have also been previously proposed. For example, \cite[Definition~2]{liu2018optimizing} considered the case $ \lambda=0 $ for improving the robustness of the NA to sensor failures. Next we briefly discuss a special case that is more relevant from the minimum-redundancy point of view.

\subsubsection{Concatenated Nested Array (CNA)}
The S-NA coincides with a restricted additive 2-basis studied by Rohrbach in the 1930's \cite[Satz~2]{rohrbach1937beitrag}, when the shift parameter $ \lambda $ satisfies
\begin{align}
	\lambda= (N_1+1)k+N_1, \label{eq:lambda_CNA}
\end{align}
where $ k\in \{0:N_2\} $. Unaware of Rohrbach's early result, we later derived the same configuration based on the NA and called it the \emph{Concatenated Nested Array} (CNA) \cite{rajamaki2017sparselinear}.
\begin{definition}[Concatenated Nested Array (CNA) \cite{rajamaki2017sparselinear}]\label{def:CNA}
	The sensor positions of the CNA are given by 
	\begin{align*}
	\mathcal{D}_\txt{CNA} \triangleq \mathcal{D}_1 \cup (\mathcal{D}_2 +N_1) \cup (\mathcal{D}_1+N_2(N_1+1)),
	\end{align*}
	where $ N_1,N_2 \in\mathbb{N}$, and $ \mathcal{D}_1,\mathcal{D}_2$ follow \cref{def:S-NA}.
\end{definition}

The CNA is illustrated in \cref{fig:CNA}. When $N_1 = 0$ or $N_2 \in\{0,1\}$, the array degenerates to the ULA. In the interesting case when $ N_1+N_2\geq 1 $, the aperture of the CNA is\cite{rajamaki2017sparselinear}
\begin{align}
L &= (N_1+1)(N_2+1)-2. \label{eq:L_CNA}
\end{align}
Furthermore, the number of sensors is \cite{rajamaki2017sparselinear}
\begin{align}
N = 
\begin{cases}
N_1,&\txt{if } N_2=0\\
2N_1+N_2,& \txt{otherwise,}
\end{cases} \label{eq:N_CNA}
\end{align}
and the number of unit spacings evaluates to
\begin{align}
S(1) &= \begin{cases}
N_1-1,&\txt{if } N_2=0\\
N_2-1,&\txt{if } N_1=0\\
2N_1,&\txt{otherwise}.
\end{cases}\label{eq:S1_CNA}
\end{align}

\begin{figure}[]
	\centering
	\subfigure[Nested Array (NA) \cite{pal2010nested}]{\includegraphics[width=.6\linewidth]{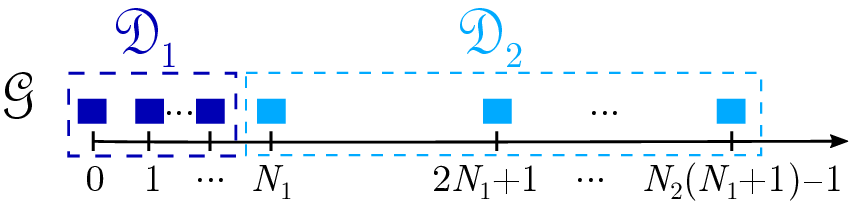}\label{fig:G_NA}}
	\subfigure[Symmetric Nested Array (S-NA)]{\includegraphics[width=0.85\linewidth]{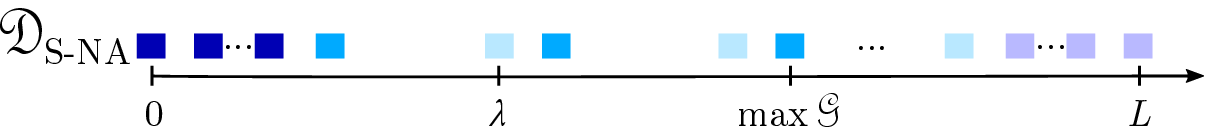}\label{fig:S-NA}}
	\subfigure[Concatenated Nested Array (CNA) \cite{rohrbach1937beitrag,rajamaki2017sparselinear}]{\includegraphics[width=1\linewidth]{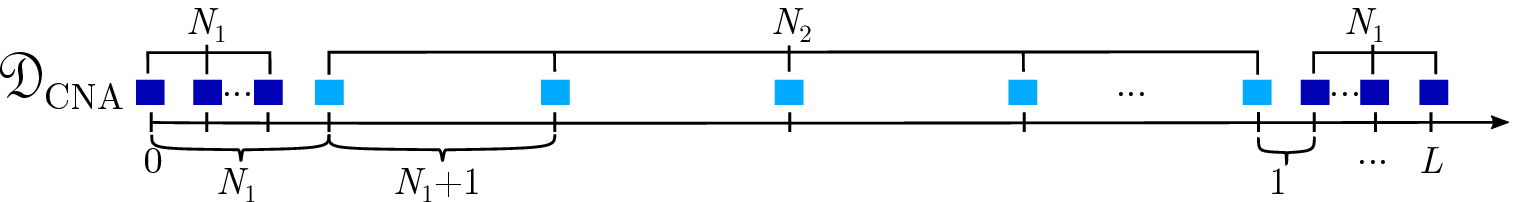}\label{fig:CNA}}
	\caption{The (a) NA generator and shift parameter $\lambda$ define the (b) S-NA. The S-NA reduces to the (c) CNA, when $\lambda$ follows \eqref{eq:lambda_CNA}. The minimum-redundancy S-NA solving \eqref{p:SA} is a CNA.}
\end{figure}

\subsubsection{Minimum-redundancy solution} \label{sec:S-NA_MRA}
The S-NA solving \eqref{p:SA} is actually a CNA. This follows from the fact that any S-NA that has a contiguous sum co-array, but is not a CNA, has redundant sensors, due to the reduced number of overlapping sensors between $\mathcal{G}$ and its shifted mirror image.
\begin{proposition}\label{thm:S-NA-CNA}
	The S-NA solving \eqref{p:SA} is a CNA.
\end{proposition}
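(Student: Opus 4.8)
The plan is to show that any S-NA with a contiguous sum co-array that is \emph{not} a CNA is strictly suboptimal for \eqref{p:SA}, i.e.\ has fewer contiguous DoFs per sensor than some CNA with the same number of sensors. Since \eqref{p:SA} maximizes the aperture $L=\max\mathcal{G}+\lambda$ subject to a contiguous sum co-array (equivalently $H=2L+1$) for a fixed number of sensors $N$, it suffices to show: (a) among all S-NA with a contiguous sum co-array and $N$ sensors, a CNA achieves the largest aperture; and (b) a CNA is in fact feasible for \eqref{p:SA}, i.e.\ it satisfies \ref{c:gmg} and \ref{c:gpg}. Part (b) is essentially immediate from the earlier results: the NA generator has a contiguous difference co-array, so \cref{gmg_1} of \cref{thm:c_suff_gmg} gives \ref{c:gmg}, and the choice \eqref{eq:lambda_CNA} with $k\le N_2$ forces $\lambda\le N_1+(N_1+1)N_2 = 2\max\mathcal{G}+1$ while a short direct check (or \cref{thm:c_suff_gpg}) gives \ref{c:gpg}.

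The substance is part (a). First I would fix $N$ and note that an S-NA is determined by $(N_1,N_2,\lambda)$, with the number of sensors equal to $2|\mathcal{G}|$ minus the overlap $|\mathcal{G}\cap(\max\mathcal{G}-\mathcal{G}+\lambda)|$, where $|\mathcal{G}|=N_1+N_2$ (assuming the nondegenerate case $N_1\ge1$, $N_2\ge2$; the degenerate cases reduce to the ULA, which is a CNA). The aperture is $L=\max\mathcal{G}+\lambda=(N_1-1)+(N_2-1)(N_1+1)+\lambda$, which is strictly increasing in $\lambda$. So for fixed $(N_1,N_2)$ we want $\lambda$ as large as possible; but increasing $\lambda$ both threatens \ref{c:gpg} (which caps $\lambda\le H_\mathcal{G}$, the first hole of $\mathcal{G}+\mathcal{G}$) and generically decreases the overlap, hence \emph{increases} $N$. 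The key structural claim is that, to keep $N$ fixed while pushing $\lambda$ to its maximum feasible value, one is forced into exactly the CNA overlap pattern: the mirrored shifted copy must meet $\mathcal{G}$ in the sparse ULA part $\mathcal{D}_2+N_1$ in precisely the configuration where $\lambda=(N_1+1)k+N_1$, so that the two dense ULAs $\mathcal{D}_1$ and $\mathcal{D}_1+N_2(N_1+1)$ sit at the ends and the shared middle is a single sparse ULA of spacing $N_1+1$. Any other feasible $\lambda$ with the same $N$ gives a strictly smaller $L$, and any $\lambda$ giving a larger $L$ either violates \ref{c:gpg} or strictly increases $N$ (so it is not a competitor at this $N$).

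Concretely, the argument I would run: parametrize $\lambda$ and compute the overlap $|\mathcal{G}\cap(L-\mathcal{G})|$ explicitly as a function of $\lambda$ using the nested structure of $\mathcal{G}$; observe it is a nonincreasing step function of $\lambda$ that drops to its minimum exactly on the arithmetic progression \eqref{eq:lambda_CNA}; conclude that for each attainable value of $N$ the maximal feasible $\lambda$ (hence maximal $L$) is realized by a CNA; then verify the boundary cases where \ref{c:gpg} rather than the overlap is the binding constraint — here one checks that $H_\mathcal{G}=2\max\mathcal{G}+1$ is impossible for a strict NA (it would force the NA itself to have a contiguous sum co-array, which by \cref{thm:N_nec} needs $\{0,1,L_\mathcal{G}-1,L_\mathcal{G}\}\subseteq\mathcal{G}$, failing for $N_2\ge2$, $N_1\ge1$ unless it degenerates), so $\lambda=2\max\mathcal{G}+1$ is never both feasible and CNA-exceeding. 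The main obstacle is the bookkeeping in the overlap computation: one must carefully track when a shifted mirrored point $L-g$ lands back in $\mathcal{D}_1$, in $\mathcal{D}_2+N_1$, or in the other dense block, across the three regimes $\lambda<N_1$, $N_1\le\lambda\le N_2(N_1+1)$, and $\lambda>N_2(N_1+1)$, and show the count is minimized (equivalently $N$ is maximized, i.e.\ the array is least redundant) exactly on \eqref{eq:lambda_CNA}. Everything else is assembly from \cref{thm:SA_coarray_c}, \cref{thm:c_suff_gmg}, \cref{thm:c_suff_gpg}, and the aperture/counting formulas \eqref{eq:L_CNA}--\eqref{eq:N_CNA}.
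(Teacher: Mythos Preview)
Your high-level strategy matches the paper's: verify \ref{c:gmg} via the NA's contiguous difference co-array, translate \ref{c:gpg} into an upper bound on $\lambda$, and then show every feasible non-CNA S-NA is dominated by a CNA with the same $N$ and larger aperture. The paper does exactly this, computing the first hole of $\mathcal{G}+\mathcal{G}$ to obtain the sharp bound $\lambda\le N_2(N_1+1)+N_1$ (in the nondegenerate case), and then case-splitting on $\lambda$.

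However, your proposed mechanism for the domination step is wrong. You claim the overlap $|\mathcal{G}\cap(\max\mathcal{G}-\mathcal{G}+\lambda)|$ is a \emph{nonincreasing} step function of $\lambda$ that attains its \emph{minimum} on the progression \eqref{eq:lambda_CNA}. A small example refutes both assertions: take $N_1=2$, $N_2=3$, so $\mathcal{G}=\{0,1,2,5,8\}$. The overlap for $\lambda=0,1,\ldots,11$ is $2,2,3,0,0,2,0,0,1,0,0,0$, which is not monotone, and the CNA values $\lambda\in\{2,5,8,11\}$ are local \emph{maxima} of the overlap (the shifted mirrored sparse ULA aligns with the sparse ULA of $\mathcal{G}$), not minima. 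So for a fixed generator the CNA shifts are precisely where $N$ is \emph{smallest}, not largest, among nearby $\lambda$.

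The correct picture, which the paper exploits, is the reverse: off the CNA progression (and for $\lambda\ge N_1$) the overlap is typically zero, giving $N=2|\mathcal{G}|$; but the largest feasible $\lambda$, namely $\lambda=N_2(N_1+1)+N_1$, is itself on the progression and yields the CNA with that same $N$ and strictly larger aperture. For the remaining range $0\le\lambda\le N_1-1$ the overlap is exactly $2$ (the endpoints of $\mathcal{G}$ match the endpoints of the shifted mirror), so $N=2|\mathcal{G}|-2$, and one exhibits a specific CNA with the same $N$ and larger $L$. Your three-regime bookkeeping will produce these facts, but you should drop the monotonicity/minimum claim and instead argue directly, regime by regime, that the maximal feasible $\lambda$ for each attainable $N$ lies on \eqref{eq:lambda_CNA}.
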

\begin{proof}
	We write \ref{c:gmg} and \ref{c:gpg} as a constraint on $\lambda$, which we then show implies the proposition under reparametrization.
	
	Firstly, \ref{c:gmg} holds by \cref{gmg_1} of \cref{thm:c_suff_gmg}, since the NA has a contiguous difference co-array \cite{pal2010nested}. Secondly, the location of the first hole in the sum co-array of the NA yields that \ref{c:gpg} is satisfied if and only if 
	\begin{align}
		\lambda \leq 
		\begin{cases}
		2(N_1+N_2)-1,& \txt{ if } N_1=0 \txt{ or } N_2=0\\
		N_2(N_1+1)+N_1,& \txt{ otherwise}.
		\end{cases} \label{eq:lambda_max_S_NA}
	\end{align}
	If $0\leq \lambda \leq N_1-1$, then a CNA with the same number of sensors ($ N=2|\mathcal{G}|-2 $), but a larger aperture ($ L=\max\mathcal{G}+\lambda $) is achieved by instead setting $ \lambda= \max\mathcal{G}-N_1 $. Otherwise, it is straightforward to verify that the S-NA either reduces to the CNA, or a CNA can be constructed with the same $ N $ but a larger $ L $, by satisfying \eqref{eq:lambda_max_S_NA} with equality.
\end{proof}

By \cref{thm:S-NA-CNA}, problem \eqref{p:SA} simplifies to maximizing the aperture of the CNA for a given number of sensors:
\begin{opteq}
	\underset{N_1\in\mathbb{N},N_2\in\mathbb{N}_+}{\text{maximize}}&\ N_1N_2+N_1+N_2\ \ \text{s.t.}\ \ 2N_1+N_2=N. \label{p:CNA}
\end{opteq}
Problem \eqref{p:CNA} actually admits a closed-form solution \cite{rajamaki2017sparselinear}. In fact, we may state the following more general result for any two-variable integer program similar to \eqref{p:CNA}.
\begin{lemma} \label{thm:general_opt}
	Let $ f$ be a concave function. The solution to
	\begin{opteq}
		\underset{x,y\in\mathbb{N}}{\text{maximize}}\ f(x,y)\ \text{subject to}\ g(x)=y \label{p:gen}
	\end{opteq}
	is given by $ x = \lceil z \rfloor + k$ and $ y=g(x) $, where $ z $ solves
	\begin{align*}
		\underset{z\in\mathbb{R}_+}{\text{maximize}}&\ f(z,g(z)),
	\end{align*}
	and $ |k|$ is the smallest non-negative integer satisfying $ g( \lceil z \rfloor + k) \in \mathbb{N}$, where $ k\in \{-\lceil z \rfloor,-\lceil z \rfloor+1,\ldots,-1\}\cup \mathbb{N}$.
\end{lemma}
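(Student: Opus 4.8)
The plan is to use the equality constraint to reduce the two-variable program to a one-variable one and then exploit concavity. First I would substitute $y=g(x)$, turning \eqref{p:gen} into the problem of maximising $\phi(x)\triangleq f(x,g(x))$ over the integers $x\in\mathbb{N}$ with $g(x)\in\mathbb{N}$; this feasible set is nonempty, since by construction $x=\lceil z\rfloor+k$ with the stated $k$ satisfies $g(x)\in\mathbb{N}$. The effective objective $\phi$ is concave (this is what the concavity hypothesis on $f$ buys us; it holds in \eqref{p:CNA}, where $g$ is affine), so the continuous relaxation $\max_{z\in\mathbb{R}_+}\phi(z)$ has a maximiser $z$, and $\phi$ is non-decreasing on $[0,z]$ and non-increasing on $[z,\infty)$.

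From this unimodality I would read off the integer optimum. On each side of $z$ the value of $\phi$ decreases with distance from $z$, so among integers $\le z$ the best is $\lfloor z\rfloor$ and among integers $\ge z$ the best is $\lceil z\rceil$; when $\phi$ is symmetric about $z$ --- as it is for the quadratic objective of \eqref{p:CNA} --- the better of the two is the one nearer to $z$, i.e., $\lceil z\rfloor$. Re-imposing $g(x)\in\mathbb{N}$, the same monotonicity shows that the optimal feasible integer is the feasible integer closest to $z$, and this is precisely what the rule ``$x=\lceil z\rfloor+k$, $|k|$ minimal subject to $g(\lceil z\rfloor+k)\in\mathbb{N}$'' returns: one tests $\lceil z\rfloor$, then $\lceil z\rfloor\pm1$, then $\lceil z\rfloor\pm2$, and so on, with the restriction $k\ge-\lceil z\rfloor$ merely keeping $x\ge0$. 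Specialising to \eqref{p:CNA}, where $f(x,y)=xy+x+y$ and $g(x)=N-2x$, one has $\phi(x)=-2x^2+(N-1)x+N$ and $z=(N-1)/4$, and the rounding plus a one-step parity correction recover the closed-form CNA parameters of \cite{rajamaki2017sparselinear}.

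The main obstacle I anticipate is the feasibility bookkeeping together with the role of symmetry. Concavity alone yields only one-sided monotonicity of $\phi$, so I need (i) to argue that the outward search in order of increasing $|k|$ cannot skip past a strictly better feasible integer --- which follows from that monotonicity --- and (ii) to break the tie between $\lceil z\rfloor+k$ and $\lceil z\rfloor-k$ when both are feasible, which requires either the symmetry of $\phi$ about $z$ (available in \eqref{p:CNA}) or a direct comparison of the two values $\phi(\lceil z\rfloor+k)$ and $\phi(\lceil z\rfloor-k)$. Once these points are in place, the remaining steps are routine consequences of concavity.
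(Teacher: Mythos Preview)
Your approach is essentially the paper's: reduce to the one-variable function $\phi(x)=f(x,g(x))$, use concavity to get unimodality about the continuous maximiser $z$, take $\lceil z\rfloor$ as the unconstrained integer optimum, and then step outward in $|k|$ until $g(\lceil z\rfloor+k)\in\mathbb{N}$. The paper's proof is the same three-line argument, only terser.

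Where you differ is in being more scrupulous than the paper. You correctly flag two points the paper's proof glosses over: (i) concavity of $f$ alone does not make $\phi$ concave unless $g$ is affine, and (ii) unimodality does not by itself force the \emph{nearer} integer to $z$ to dominate, nor does it break ties between $\lceil z\rfloor+k$ and $\lceil z\rfloor-k$; one needs symmetry of $\phi$ about $z$ (or a direct comparison) for that. The paper simply asserts ``since $f(z,g(z))$ is concave, $x=\lceil z\rfloor$ maximises $f(x,g(x))$ among all $x\in\mathbb{N}$'' and ``by concavity of $f$, the smallest $|k|$ \ldots\ yields the global optimum,'' implicitly leaning on the quadratic (hence symmetric) structure of \eqref{p:CNA}, where $g$ is affine and $\phi$ is a downward parabola. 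Your caveats are therefore not obstacles to completing the proof along the paper's lines --- they are exactly the extra hypotheses the paper tacitly uses, and once you grant them (as the intended application does), the remaining steps are the routine monotonicity arguments you outline.
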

\begin{proof}
	Since $ f(z,g(z)) $ is concave, $ x=\lceil z\rfloor $ maximizes $ f(x,g(x)) $ among all $ x\in\mathbb{N}$. This is a global optimum of \eqref{p:gen}, if $ g(\lceil z\rfloor)\in\mathbb{N} $. Generally, the optimal solution can be expressed as $ x= \lceil z\rfloor + k$, where $ k \in \{-\lceil z \rfloor:-1\}\cup \mathbb{N}$. By concavity of $ f $, the smallest $ |k| $ satisfying $ g(\lceil z\rfloor + k)\in\mathbb{N} $ then yields the global optimum of \eqref{p:gen}.
\end{proof}

\cref{thm:general_opt} is useful for solving two-variable integer programs similar to \eqref{p:CNA} in closed-form. Such optimization problems often arise in, e.g., sparse array design \cite{rajamaki2018symmetric,rajamaki2020sparselow}. In our case, \cref{thm:general_opt} allows expressing the minimum-redundancy parameters of the CNA directly in terms of $ N $ as follows\footnote{An equivalent result is given in \cite[Eq.~(7) and (8)]{rohrbach1937beitrag} and \cite[Eq.~(13)]{rajamaki2017sparselinear}, but in a more inconvenient format due to the use of the rounding operator.}.
\begin{thm}[Minimum-redundancy parameters of CNA] \label{thm:param_CNA}
	The parameters of the CNA solving \eqref{p:CNA} are
	\begin{align}
	N_1 &= (N-\alpha)/4 \label{eq:N1_CNA_opt}\\
	N_2 &= (N+\alpha)/2, \label{eq:N2_CNA_opt}
	\end{align}
	where $ N=4m+k, m\in \mathbb{N}, k \in \{0:3\}$, and
	\begin{align}
	\alpha = (k+1)\bmod 4 -1.\label{eq:alpha_CNA}
	\end{align}
\end{thm}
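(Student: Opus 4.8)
The plan is to recast \eqref{p:CNA} as a one-dimensional concave integer program, solve it with \cref{thm:general_opt}, and then convert the resulting rounding expression into the closed form \eqref{eq:N1_CNA_opt}--\eqref{eq:alpha_CNA} by checking the four residue classes of $N$ modulo $4$. First I would eliminate the constraint by substituting $N_2 = N - 2N_1$ into the objective, obtaining the univariate function $\phi(N_1) \triangleq N_1(N-2N_1) + N_1 + (N-2N_1) = -2N_1^2 + (N-1)N_1 + N$, which equals $L+1$ by \eqref{eq:L_CNA}. Since $\phi$ is a downward parabola it is concave, with unconstrained real maximizer $z = (N-1)/4$. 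The map $g(N_1) = N - 2N_1$ sends every integer to an integer, and at $N_1 \approx z$ one has $N_2 = N - 2N_1 \approx (N+1)/2 \ge 1$ for all $N$ outside a few degenerate small values; hence the feasibility constraint $N_2\in\mathbb{N}_+$ is inactive and the integrality correction in \cref{thm:general_opt} is trivial ($k=0$ in the notation of that lemma). Therefore the minimum-redundancy CNA has $N_1 = \lceil (N-1)/4 \rfloor$ and $N_2 = N - 2N_1$.

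The second step is to express $\lceil (N-1)/4 \rfloor$ in closed form. Writing $N = 4m+k$ with $m\in\mathbb{N}$ and $k\in\{0:3\}$, the quantity $(N-1)/4$ equals $m-\tfrac14$, $m$, $m+\tfrac14$, $m+\tfrac12$ for $k=0,1,2,3$, which rounds to $m$, $m$, $m$, $m+1$, respectively. Matching this against $N_1 = (N-\alpha)/4$ forces $\alpha = 0,1,2,-1$ in the four cases, and a direct check shows these values coincide with $(k+1)\bmod 4 - 1$, establishing \eqref{eq:alpha_CNA}. Equation \eqref{eq:N2_CNA_opt} then follows at once, since $N_2 = N - 2N_1 = N - (N-\alpha)/2 = (N+\alpha)/2$, which is a non-negative integer because $N$ and $\alpha$ have the same parity in each residue class.

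The one delicate point, and the main obstacle, is the case $k=3$, where $z = m+\tfrac12$ lies exactly halfway between two integers so that the rounding operator is ambiguous. I would handle this by noting $\phi(m) = \phi(m+1) = 2m^2+6m+3$, so \eqref{p:CNA} has two distinct minimum-redundancy solutions and \cref{thm:param_CNA} merely records the one obtained by rounding half away from zero. Equivalently, and more robustly, from $\phi(x) - \phi(x-1) = N+1-4x$ one sees that an integer $x^{*}$ is optimal iff $(N-3)/4 \le x^{*} \le (N+1)/4$; this interval has unit length, so it contains a single integer when $k\in\{0,1,2\}$ (pinning down $N_1$ uniquely and matching the rounding above) and exactly the two integers $m$ and $m+1$ when $k=3$. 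A closing remark would verify $N_1 \ge 0$ and $N_2 \ge 1$ for $N$ in the non-degenerate range, confirming that the stated pair is feasible for \eqref{p:CNA} and hence optimal.
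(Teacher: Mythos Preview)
Your proposal is correct and follows essentially the same route as the paper: invoke \cref{thm:general_opt} to obtain $N_1=\lceil (N-1)/4\rfloor$, then do a case analysis on $N\bmod 4$ to rewrite this as $(N-\alpha)/4$ with $\alpha$ given by \eqref{eq:alpha_CNA}, and recover $N_2$ from the constraint. The paper's proof is terser and does not explicitly discuss the tie at $k=3$; your observation that $\phi(m)=\phi(m+1)$ there (so the theorem simply records one of two equally good solutions) is a worthwhile addition.
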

\begin{proof}
	By \cref{thm:general_opt}, the optimal solution to \eqref{p:CNA} is given by $ N_1=\lceil (N-1)/4 \rfloor $ and $ N_2 = N-2N_1 $ \cite{rajamaki2017sparselinear}. Since any $ N\in\mathbb{N} $ may be expressed as $ N=4m+k$, where $m\in\mathbb{N} $ and $ k \in \{0:3\}$, we have
	\begin{align*}
	N_1 = \Big\lceil m+\frac{k-1}{4}\Big\rfloor 
	= \begin{cases}
	m,& k \in \{0,1,2\}\\
	m+1,& k = 3.
	\end{cases}
	\end{align*}
	Since $ m = (N-k)/4 $, we have $ N_1 = m = (N-k)/4 $, when $ k \in \{0,1,2\} $. Similarly, when $ k=3 $, we have $ N_1 = m+1 = (N-k+4)/4 =(N+1)/4$, which yields \eqref{eq:N1_CNA_opt} and \eqref{eq:alpha_CNA}. Substituting \eqref{eq:N1_CNA_opt} into \eqref{eq:N_CNA} then yields \eqref{eq:N2_CNA_opt}.
\end{proof}

By \cref{thm:param_CNA}, the properties of the minimum-redundancy CNA can also be written compactly as follows.
\begin{corollary}[Properties of minimum-redundancy CNA]
	The aperture $ L $, number of sensors $ N $, and number of unit spacings $ S(1) $ of the CNA solving \eqref{p:CNA} are
	\begin{align*}
	L &= (N^2+6N-7)/8-\beta\\
	N &= 2\sqrt{2}\sqrt{L+2+\beta}-3\\
	S(1) &= (N-\alpha)/2,
	\end{align*}
	where $ \beta= (\alpha -1)^2/8 $ and $ \alpha $ is given by \eqref{eq:alpha_CNA}.
\end{corollary}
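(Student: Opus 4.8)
The plan is to derive the three stated identities by direct substitution of the minimum-redundancy parameters from \cref{thm:param_CNA} into the closed-form expressions \eqref{eq:L_CNA}, \eqref{eq:N_CNA}, and \eqref{eq:S1_CNA}, then simplify using the defining relation $N = 4m+k$ together with the definitions of $\alpha$ and $\beta$. First I would handle the aperture: starting from $L = (N_1+1)(N_2+1)-2$ and plugging in $N_1 = (N-\alpha)/4$ and $N_2 = (N+\alpha)/2$, I expand the product to get a quadratic in $N$ plus terms in $\alpha$ and $\alpha^2$; collecting the $\alpha^2$ contribution into $\beta = (\alpha-1)^2/8$ should produce exactly $L = (N^2+6N-7)/8 - \beta$. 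The bookkeeping is routine: the cross terms linear in $\alpha$ must cancel (they will, because $N_1$ and $N_2$ carry opposite-sign $\alpha$ corrections with the right ratio), and the $\alpha^2$ remainder must be massaged into the $(\alpha-1)^2$ form using the constant offset $-7/8$ versus $-2$.

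Next I would invert the aperture relation to express $N$ in terms of $L$. Treating $L + 2 + \beta = (N^2+6N+9)/8 = (N+3)^2/8$ as the key intermediate identity — which is where the definition of $\beta$ earns its keep — taking square roots and solving for $N$ gives $N = 2\sqrt{2}\sqrt{L+2+\beta} - 3$, with the positive root selected since $N \geq 0$. The only subtlety is confirming that $(N+3)^2 = 8(L+2+\beta)$ holds exactly for every residue class $k \in \{0:3\}$, which follows immediately once the aperture formula is established, since $\beta$ was defined precisely to complete the square.

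For the third identity I would use \eqref{eq:S1_CNA} in the generic case $S(1) = 2N_1$ (valid whenever $N_1 \geq 1$ and $N_2 \geq 1$, i.e., the non-degenerate regime of \eqref{p:CNA}), so that $S(1) = 2 \cdot (N-\alpha)/4 = (N-\alpha)/2$ directly. I expect the main obstacle — such as it is — to be the careful verification that these substitutions are valid, namely checking the edge cases $k \in \{0,1,2,3\}$ individually to confirm that $N_1$ and $N_2$ are indeed non-negative integers of the right parity (so that $(N-\alpha)/4$ and $(N+\alpha)/2$ are integers), and that the degenerate cases $N_2 \in \{0,1\}$ occur only for very small $N$ outside the regime of interest; this is exactly the content already packaged in \cref{thm:param_CNA} and its proof, so invoking it handles the parity and integrality concerns. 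Everything else is a short algebraic simplification that I would carry out inline without further comment.
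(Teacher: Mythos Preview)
Your proposal is correct and follows exactly the route the paper takes: the paper's own proof is the one-line remark that the corollary follows from \cref{thm:param_CNA} together with \eqref{eq:L_CNA}--\eqref{eq:S1_CNA}, and you have simply spelled out that substitution and simplification in detail. The completing-the-square observation $8(L+2+\beta)=(N+3)^2$ and the identity $S(1)=2N_1=(N-\alpha)/2$ are precisely the computations the paper leaves implicit.
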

\begin{proof}
	This follows from \cref{thm:param_CNA}, and \eqref{eq:L_CNA}--\eqref{eq:S1_CNA}. 
\end{proof}

\subsection{Symmetric Kl{\o}ve-Mossige array (S-KMA)} \label{sec:KA}
In the 1980's, \emph{Kl{\o}ve and Mossige} proposed an additive 2-basis with some interesting properties \cite{mossige1981algorithms}. In particular, the basis has a contiguous difference co-array (see Appendix~\ref{a:KMA_diff_coarray}) and a low asymptotic redundancy ($ R_\infty\!=\!1.75 $), despite having a non-contiguous sum co-array (see Appendix~\ref{a:S-KMA_lambda}). We call this construction the \emph{Kl{\o}ve-Mossige Array} (KMA). As shown in \cref{fig:G_KMA}, the KMA contains a CNA, and can therefore be interpreted as another extension of the NA. However, selecting the KMA as the generator in \eqref{eq:SA} yields the novel \emph{Symmetric Kl{\o}ve-Mossige Array} (S-KMA), shown in \cref{fig:S-KMA}.
\begin{definition}[Symmetric Kl{\o}ve-Mossige Array (S-KMA)]\label{def:S-KMA}
	The sensor positions of the S-KMA are given by \eqref{eq:SA}, where
	\begin{align*}
	\mathcal{G} &=\mathcal{D}_\txt{CNA} \cup (\mathcal{D}_3+2\max \mathcal{D}_\txt{CNA}+1),\\
	\mathcal{D}_3&= \{0:N_1:N_1^2\}+\bigcup_{i=1}^{N_3}\{(i-1)(N_1^2+\max \mathcal{D}_\txt{CNA}+1)\},
	\end{align*}
	$\mathcal{D}_\txt{CNA}$ follows \cref{def:CNA}, and parameters\footnote{The S-KMA is undefined for $ N_1=N_2=0 $, as $\mathcal{D}_\txt{CNA} = \emptyset$. Consequently, we will not consider this case further. However, we note that \cref{def:S-KMA} is easily modified so that the S-KMA degenerates to a ULA even in this case.} $ N_1, N_2,N_3\!\in\!\mathbb{N} $.
\end{definition}

\subsubsection{Kl{\o}ve array}
The structure of the S-KMA simplifies substantially when the shift parameter $ \lambda $ is of the form
\begin{align}
\lambda=
2\max\mathcal{D}_\txt{CNA}+1+(\max\mathcal{D}_\txt{CNA}+N_1^2)k, \label{eq:lambda_KA}
\end{align}
where $ k\!\in\!\{0\!:\!N_3\}$. In fact, this S-KMA coincides with the \emph{Kl{\o}ve array} (KA), which is based on a class of restricted additive 2-bases proposed by \emph{Kl{\o}ve} in the context of additive combinatorics \cite{klove1981class} (see also \cite{rajamaki2020sparselow}).
\begin{definition}[Kl{\o}ve array (KA) \cite{klove1981class}]
	The sensor positions of the KA with parameters $N_1,N_2,N_3\!\in\!\mathbb{N}$ are given by
	\begin{align*}
	\mathcal{D}_\txt{KA} \triangleq&\  \mathcal{D}_\txt{CNA} \cup (\mathcal{D}_3+2\max \mathcal{D}_\txt{CNA}+1)\nonumber\\ 
	&\cup (\mathcal{D}_\txt{CNA}+(N_3+2)\max \mathcal{D}_\txt{CNA}+N_3(N_1^2+1)+1),
	\end{align*}
 where $\mathcal{D}_\txt{CNA}$ follows \cref{def:CNA}, and $\mathcal{D}_3$ \cref{def:S-KMA}.
\end{definition}

\cref{fig:KA} illustrates the KA, which consists of two CNAs connected by a sparse mid-section consisting of $ N_3  $ widely separated and sub-sampled ULAs with $  N_1+1 $ elements each. The KA reduces to the CNA when $ N_1 =0 $ or $ N_2 =0 $. When $ N_2\geq 1 $, the aperture $ L $, and the number of sensors $ N $ of the KA evaluate to \cite[Lemma 3]{klove1981class}:
\begin{align}
L&=(N_1+1)(N_3(N_1+N_2)+3N_2+3)-5 \label{eq:L_KA}\\
N&=2(2N_1+N_2)+N_3(N_1+1).\label{eq:N_KA}
\end{align}
Furthermore, the number of unit spacings $ S(1) $ is \cite{rajamaki2020sparselow}
\begin{align}
S(1) &=
\begin{cases}
N_3+1,& \txt{if } N_1 = 0 \txt{ and } N_2  = 1\\
2(N_2-1),& \txt{if } N_1 = 0 \txt{ and } N_2\geq 2\\
N_3+4,& \txt{if } N_1 = 1\\
4N_1,& \txt{otherwise.}
\end{cases}\label{eq:S1_KA}
\end{align}
\begin{figure*}[]
	\centering
	\subfigure[Kl{\o}ve-Mossige array (KMA) \cite{mossige1981algorithms}]{\includegraphics[width=0.85\textwidth]{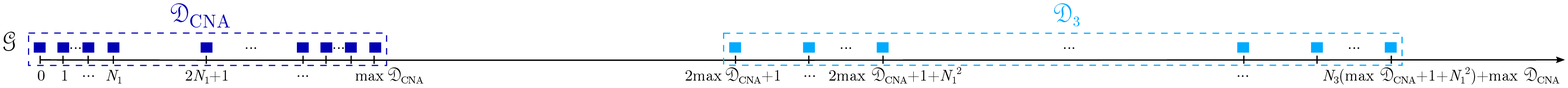}\label{fig:G_KMA}}
	\hfil
	\subfigure[Symmetric Kl{\o}ve-Mossige array (S-KMA)]{\includegraphics[width=1\textwidth]{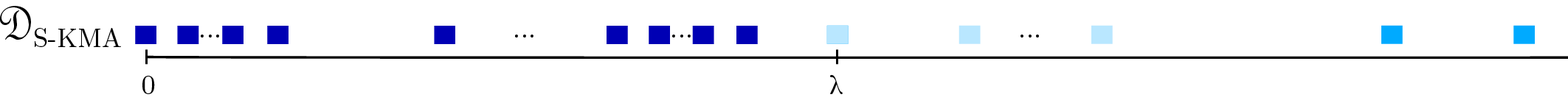}\label{fig:S-KMA}}
	\hfil
	\subfigure[Kl{\o}ve array (KA) \cite{klove1981class,rajamaki2020sparselow}]{\includegraphics[width=.9\textwidth]{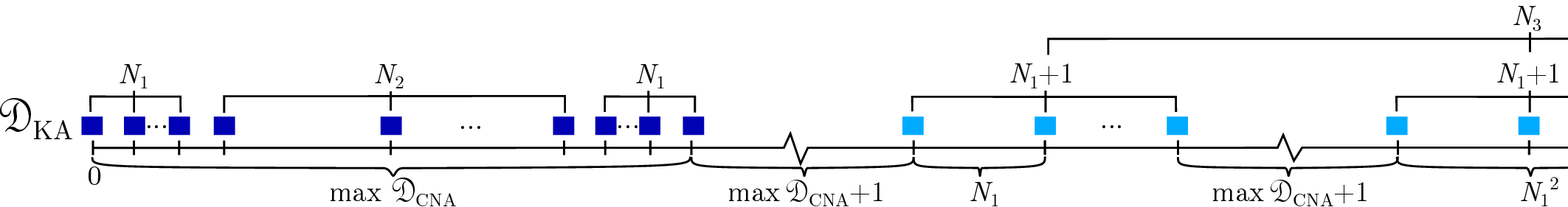}\label{fig:KA}}
	\caption{The Kl{\o}ve-Mossige generator in (a) has a contiguous difference co-array, but a non-contiguous sum co-array. The S-KMA in (b) reduces to the KA in (c), when $\lambda$ follows \eqref{eq:lambda_KA}. The KA has a contiguous sum co-array and may be interpreted as a generalization of the CNA.}
\end{figure*}

\subsubsection{Minimum-redundancy solution}
The simple structure of the KMA suggests that the minimum redundancy S-KMA is also a KA. Intuitively, any S-KMA that is not a KA is unnecessarily redundant, since increasing $ \lambda $ to its maximum value $ H $ yields a KA (cf. Appendix~\ref{a:S-KMA_lambda}). However, proving that the S-KMA solving \eqref{p:SA} is a KA is not as simple as in the case of the S-NA and CNA (see \cref{thm:S-NA-CNA} in \cref{sec:S-NA_MRA}). In particular, the S-KMA cannot easily be modified into a KA with an equal or larger aperture without affecting the number of sensors $ N $. A formal proof is therefore still an open problem.

Nevertheless, we may derive the minimum redundancy parameters of the KA. The \emph{minimum-redundancy Kl{\o}ve array} (KA$ _R $) maximizes the aperture for a given $ N $, which is equivalent to solving the following optimization problem:
\begin{opteq}
	\underset{N_1,N_3\in\mathbb{N}, N_2\in\mathbb{N}_+}{\text{maximize}}&\ 3N_1(N_2+1)+3N_2+N_3(N_1+N_2)(N_1+1)\nonumber \\
	\text{subject to}&\ 2(2N_1+N_2)+N_3(N_1+1)=N. \label{p:KA_R}
\end{opteq}
Problem \eqref{p:KA_R} is an integer program with three unknowns. This problem is challenging to solve in closed form for \emph{any} $ N $, in contrast to \eqref{p:CNA}, which only has two integer unknowns (cf. \cref{thm:param_CNA}). Nevertheless, \emph{some} $ N $ readily yield a closed-form solution to \eqref{p:KA_R}, as shown by the following theorem.
\begin{thm}[Minimum-redundancy parameters of KA] \label{thm:param_KA_R}
	The parameters of the KA solving \eqref{p:KA_R} are 
	\begin{align}
	N_1 &= (N+3)/23 \label{eq:N1_relaxed}\\
	N_2 &= 5(N+3)/23\label{eq:N2_relaxed}\\
	N_3 &= (9N-42)/(N+26),\label{eq:N3_relaxed}
	\end{align}
	when $ N\!\in\!\{20,43,66,112,250\} $.
\end{thm}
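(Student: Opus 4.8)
The plan is to drop the integrality constraints in \eqref{p:KA_R}, solve the resulting continuous program in closed form, and then check that for the five listed values of $N$ the continuous optimizer is in fact integral and feasible, so it must also solve the integer program (the integer-feasible set is contained in the relaxed one, hence the relaxed optimum upper-bounds \eqref{p:KA_R}). Recall that the objective of \eqref{p:KA_R} equals $L+2$ via \eqref{eq:L_KA}, so this is indeed the minimum-redundancy problem.

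The crux is to eliminate the \emph{right} variable. The objective carries the trilinear term $N_3(N_1+N_2)(N_1+1)$, so it is not jointly concave, and substituting for $N_2$ would leave an unwieldy quartic in $(N_1,N_3)$. Instead I would use the constraint in the form $N_3(N_1+1)=N-4N_1-2N_2$ to replace the entire product $N_3(N_1+1)$: the factor $(N_1+1)$ cancels, and a one-line expansion collapses the objective to the quadratic
\begin{align*}
\tilde f(N_1,N_2)=-4N_1^2-3N_1N_2-2N_2^2+(N+3)(N_1+N_2).
\end{align*}
Its Hessian has diagonal entries $-8,-4$ and off-diagonal entry $-3$, hence negative trace and determinant $32-9=23>0$, so $\tilde f$ is strictly concave; the relaxation becomes the maximization of $\tilde f$ over the triangle $\{N_1\ge0,\ N_2\ge0,\ 4N_1+2N_2\le N\}$, the last inequality encoding $N_3\ge0$. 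Setting $\nabla\tilde f=0$ gives the linear system $8N_1+3N_2=N+3$ and $3N_1+4N_2=N+3$, whose solution is $N_2=5N_1$, $N_1=(N+3)/23$ --- these are \eqref{eq:N1_relaxed}--\eqref{eq:N2_relaxed} --- and back-substituting into $N_3=(N-4N_1-2N_2)/(N_1+1)$ yields \eqref{eq:N3_relaxed}. Since $4N_1+2N_2=14(N+3)/23<N$ for every $N\ge5$, this stationary point lies strictly inside the triangle, so by strict concavity it is the unique maximizer of the relaxation.

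It remains to identify which $N$ make the relaxed optimum integral. Because $N_2=5N_1$ is automatic, the only obstruction is $N_3$; writing $N=23N_1-3$ reduces \eqref{eq:N3_relaxed} to $N_3=9-12/(N_1+1)$, which is a nonnegative integer exactly when $(N_1+1)\mid 12$ and $N_1\ge1$, i.e. $N_1\in\{1,2,3,5,11\}$, equivalently $N\in\{20,43,66,112,250\}$. For each such $N$ one verifies that $(N_1,N_2,N_3)$ is a nonnegative integer triple with $N_2\ge1$ satisfying the constraint; being feasible for \eqref{p:KA_R} and attaining the relaxation's optimum, it is optimal, which proves the theorem. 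The only real obstacle is noticing the cancellation of $(N_1+1)$ in the first step; after that the argument is a $2\times2$ linear solve plus a divisibility check --- which is also why the general problem \eqref{p:KA_R}, and the question of whether other $N$ admit closed forms, is left open.
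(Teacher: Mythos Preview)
Your proof is correct and follows essentially the same route as the paper: eliminate $N_3$ via the constraint to obtain the concave quadratic $(N_1+N_2)(N+3)-4N_1^2-3N_1N_2-2N_2^2$, solve the $2\times2$ linear system from the first-order conditions, and then check integrality of the resulting expressions. Your divisibility argument via $N_3=9-12/(N_1+1)$ is a tidier way to carry out the last step than the paper's parametrization $N=(26l+42)/(9-l)$, and your explicit verification of strict concavity and interior feasibility makes the relaxation argument airtight, but the overall strategy is the same.
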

\begin{proof}\let\qed\relax 
	Under certain conditions, we may obtain a solution to \eqref{p:KA_R} by considering a relaxed problem. Specifically, solving \eqref{eq:N_KA} for $ N_3 $, substituting the result into \eqref{eq:L_KA}, and
	relaxing $ N_1,N_2\in\mathbb{R} $ leads to the following concave quadratic program:
	\begin{align*}
	\underset{N_1,N_2\in\mathbb{R}}{\text{maximize}}&\ (N_1+N_2)(N\!+\!3)-3N_1N_2-4N_1^2-2N_2^2.
	\end{align*}
	At the critical point of the objective function we have
	\begin{align*}
	\partial L/\partial N_1&=N-8N_1-3N_2+3 = 0\\
	\partial L/\partial N_2&=N-3N_1-4N_2+3 = 0.
	\end{align*}
	Solving these equations for $ N_1$ and $N_2 $ yields \eqref{eq:N1_relaxed} and \eqref{eq:N2_relaxed}, which when substituted into \eqref{eq:N_KA} yields \eqref{eq:N3_relaxed}. These are also solutions to \eqref{p:KA_R} if:
	\begin{enumerate}[label=(\roman*)]
		\item $ N_1 \in\mathbb{N}$, which holds when $ N= 23k-3, k\in\mathbb{N}_+ $, i.e., $ N =20,43,66,89,112,135,158,181,204,227,250,\ldots $
		\item $ N_2 \in \mathbb{N} $, which holds when $ N_1\in\mathbb{N}$, as $ N_2 = 5 N_1 \in \mathbb{N} $
		\item $ N_3 \in\mathbb{N} $, which holds when $ N=(26l+42)/(9-l)\in \mathbb{N} $ and $ l\in\{0:8\} $, i.e., $ N=20,43,66,112,250 $.
	\end{enumerate}
	The only integer-valued $ N $ satisfying all three conditions are $ N\!=\!20,43,66,112,250 $, as stated.
\end{proof}

The KA in \cref{thm:param_KA_R} also has the following properties:
\begin{corollary}[Properties of minimum-redundancy KA]
	The aperture $ L $, number of sensors $ N $, and number of unit spacings $ S(1) $ of the KA solving \eqref{p:KA_R} are
	\begin{align*}
	L &= (3N^2+18N-19)/23 \label{eq:L_opt_KA}\\
	N &= \sqrt{23/3}\sqrt{L+2}-3\\
	S(1) &=
	\begin{cases}
	(N+22)/6& N_1 = 1\\
	4(N+3)/23 & N_1\geq 2,
	\end{cases}
	\end{align*}
	when $ N\!\in\!\{20,43,66,112,250\}. $
\end{corollary}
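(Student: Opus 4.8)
The plan is to derive each of the three listed formulas directly from the closed-form expressions for the KA together with the minimum-redundancy parameters in \cref{thm:param_KA_R}, treating the claim purely as an algebraic simplification restricted to the five values $N\in\{20,43,66,112,250\}$. First I would substitute \eqref{eq:N1_relaxed}--\eqref{eq:N3_relaxed} into the aperture formula \eqref{eq:L_KA}. Writing $N_1=(N+3)/23$, $N_2=5(N+3)/23=5N_1$, and $N_3=(9N-42)/(N+26)$, one checks that $N_1+1=(N+26)/23$, so the factor $N_1+1$ cancels the denominator of $N_3$ cleanly; the bracketed term $N_3(N_1+N_2)+3N_2+3$ becomes a rational function of $N$ whose product with $(N+26)/23$ collapses to a quadratic in $N$. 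Carrying this out should yield $L=(3N^2+18N-19)/23$, and inverting this quadratic (taking the positive root, valid since $N\ge 20$) gives $N=\sqrt{23/3}\sqrt{L+2}-3$ after noting $3(L+2)/23 = (N+3)^2/23\cdot(1/1)$, i.e. $(N+3)^2 = (23/3)(L+2)$.

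Next I would handle $S(1)$ by invoking the piecewise formula \eqref{eq:S1_KA}. Among the five admissible values of $N$, I would identify which give $N_1=1$ (namely $N=20$, since $N_1=(N+3)/23=1$) and which give $N_1\ge 2$ (namely $N=43,66,112,250$, where $N_1=2,3,5,11$ respectively); in all these cases $N_2=5N_1\ge 2$ and $N_1\ge 1$, so only the branches $S(1)=N_3+4$ (for $N_1=1$) and $S(1)=4N_1$ (for $N_1\ge 2$) of \eqref{eq:S1_KA} are active. For $N_1=1$ I would substitute $N_3=(9N-42)/(N+26)$ and simplify; with $N=20$ one gets $N_3=138/46=3$, so $S(1)=7=(20+22)/6$, and more generally the identity $N_3+4=(N+22)/6$ should follow from the relation $N+26=23$ when $N_1=1$ — actually here it is cleaner to just verify the stated formula reduces correctly at the single point, or to observe $(N+22)/6$ with $N=20$ equals $7$. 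For $N_1\ge 2$, $4N_1=4(N+3)/23$ is immediate from \eqref{eq:N1_relaxed}.

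The main obstacle, such as it is, is bookkeeping rather than conceptual: one must be careful that the denominators $23$ and $N+26$ divide the relevant numerators exactly for each of the five values of $N$ (this is precisely the content of conditions (i)--(iii) in the proof of \cref{thm:param_KA_R}, which I may assume), and that the algebraic cancellations in \eqref{eq:L_KA} are performed correctly given its somewhat intricate form $(N_1+1)(N_3(N_1+N_2)+3N_2+3)-5$. A secondary subtlety is confirming that the $S(1)$ split between $N_1=1$ and $N_1\ge 2$ genuinely partitions $\{20,43,66,112,250\}$ and that no other branch of \eqref{eq:S1_KA} is triggered (in particular $N_2=5N_1$ is never $0$ or, when $N_1=0$, the case does not arise since $N_1\ge 1$ for all five values). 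Once these divisibility and case-coverage checks are in place, the three identities are routine substitutions, and the corollary follows, as stated, from \cref{thm:param_KA_R} and \eqref{eq:L_KA}--\eqref{eq:S1_KA}.
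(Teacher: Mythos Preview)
Your proposal is correct and follows exactly the route the paper takes: the paper's proof is the single line ``This follows from \cref{thm:param_KA_R}, and \eqref{eq:L_KA}--\eqref{eq:S1_KA},'' and your write-up simply makes explicit the substitutions and algebraic simplifications that this line summarizes. The cancellation $N_1+1=(N+26)/23$ against the denominator of $N_3$, the quadratic inversion via $23(L+2)=3(N+3)^2$, and the case split on $N_1=1$ versus $N_1\ge 2$ in \eqref{eq:S1_KA} are all correct and are precisely what the one-line proof intends.
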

\begin{proof}
	This follows from \cref{thm:param_KA_R} and \eqref{eq:L_KA}--\eqref{eq:S1_KA}.
\end{proof}

The minimum-redundancy KA achieves the asymptotic redundancy $ R_\infty\!=\!23/12 $, as shown in the following proposition, which is a reformulation of \cite[Theorem, p.~177]{klove1981class}.
\begin{proposition}[Asymptotic redundancy of KA \cite{klove1981class}]\label{thm:R_inf_KA_R}
	The asymptotic redundancy of the solution to \eqref{p:KA_R} is
	\begin{align*}
	R_\infty = 23/12 < 1.9167.
	\end{align*}
\end{proposition}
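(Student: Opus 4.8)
The plan is to compute the asymptotic redundancy directly from \cref{thm:param_KA_R} and the definition \eqref{eq:R_inf}, namely $R_\infty = \lim_{N\to\infty} N^2/(2H)$. Since the KA has a contiguous sum co-array (it is a special case of the S-KMA satisfying the conditions of \cref{thm:SA_coarray_c}), we have $H = 2L+1$, so it suffices to evaluate $\lim N^2/(4L)$ along a sequence of valid $N$. The difficulty is that the exact closed-form parameters in \cref{thm:param_KA_R} are only available for the finite set $N\in\{20,43,66,112,250\}$, which is useless for a limit. So the real content of the proof is to argue that the relaxed (real-valued) optimum governs the asymptotics: one takes $N_1,N_2,N_3$ close to the relaxed optimizers $N_1 = (N+3)/23$, $N_2 = 5(N+3)/23$, $N_3 = (9N-42)/(N+26) \to 9$, rounded to nearby integers, and shows the resulting aperture is within a lower-order correction of the relaxed optimal aperture.

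Concretely, first I would substitute the relaxed optimizers into the relaxed objective $L \approx (N_1+N_2)(N+3) - 3N_1N_2 - 4N_1^2 - 2N_2^2$ (equivalently use the displayed formula $L = (3N^2+18N-19)/23$ from the Properties corollary), giving $L \sim 3N^2/23$ as $N\to\infty$. Hence $R_\infty = \lim N^2/(4L) = \lim N^2/(4\cdot 3N^2/23) = 23/12$. Then $23/12 = 1.91\overline{6} < 1.9167$, establishing the stated inequality. The bound also matches the restricted-MRA upper bound in \cref{thm:R_inf_MRA}, since the KA construction is exactly the one Kl\o ve used to prove that bound; I would remark on this consistency.

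The one genuine obstacle is making the rounding argument rigorous: I must exhibit, for every sufficiently large $N$ (or at least an infinite increasing sequence of $N$, which suffices for the limit since $R$ is known to converge — or, more carefully, for a sequence dense enough that the sup/inf behavior is pinned down), a feasible integer triple $(N_1,N_2,N_3)$ with $2(2N_1+N_2)+N_3(N_1+1)=N$ whose aperture differs from the relaxed optimum by $O(N)$, which is lower-order compared to the $\Theta(N^2)$ leading term and therefore does not affect the ratio $N^2/(4L)$ in the limit. Because $N_3\to 9$ is (asymptotically) an integer and the objective is concave and smooth in $(N_1,N_2)$ near its unconstrained critical point, the standard concave-rounding argument (in the spirit of \cref{thm:general_opt}) gives an integer feasible point within bounded distance of the relaxed optimizer, incurring only an $O(N)$ loss in $L$. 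I would spell this out just enough to conclude, then note that since this construction is available for arbitrarily large $N$ and no KA can do asymptotically better (the relaxed problem is an upper bound on the true aperture for each $N$), the limit is exactly $23/12$. Finally, citing \cite[Theorem, p.~177]{klove1981class} for the original derivation closes the argument.
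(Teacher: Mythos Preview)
Your approach is correct and follows the same overall strategy as the paper---use the relaxed optimum to determine the leading-order aperture $L\sim 3N^2/23$, then sandwich the true optimum between the relaxed upper bound and an explicit feasible construction---but the paper executes the construction step much more concretely. Rather than invoking an abstract concave-rounding argument, the paper simply picks the arithmetic progression $N=23k+9$, $k\in\mathbb{N}$, and sets $N_1=(N-9)/23$, $N_2=5N_1$, $N_3=9$ exactly; one checks in one line that the constraint $2(2N_1+N_2)+N_3(N_1+1)=N$ holds and that \eqref{eq:L_KA} gives $L=3N^2/23+\mathcal{O}(N)$.

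What you gain from your route is a cleaner two-sided argument: you explicitly note that the relaxed aperture upper-bounds the integer optimum for every $N$, pinning $R_\infty\geq 23/12$, whereas the paper leaves this direction implicit in the phrase ``equivalent to the minimum-redundancy KA when $N\to\infty$.'' What the paper's route buys is that no rounding analysis is needed at all---the integer feasibility is exact along the chosen subsequence, and the $\mathcal{O}(N)$ error arises only from the discrepancy between $N_1=(N-9)/23$ and the relaxed $N_1=(N+3)/23$. If you want to sharpen your write-up, replace the abstract rounding paragraph with this explicit sequence; it dispatches the ``genuine obstacle'' you flagged in a single sentence.
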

\begin{proof}
Let $ N=23k+9$, where $k\in\mathbb{N} $. A feasible KA that is equivalent to the minimum-redundancy KA when $ N\to\infty $ is then given by the choice of parameters $ N_1 = (N-9)/23 $, $ N_2 = 5N_1 $, and $ N_3=9$. Substitution of these parameters into \eqref{eq:L_KA} yields $ L= 3N^2/23 +\mathcal{O}(N) $, i.e., $ R_\infty  = 23/12 $.
\end{proof}

\subsubsection{Polynomial time grid search}
Although solving \eqref{p:KA_R} in closed form for any $ N $ is challenging, we can nevertheless obtain the solution in at most $\mathcal{O}(N\log N)$ objective function evaluations. This follows from the fact that the feasible set of \eqref{p:KA_R} only has $\mathcal{O}(N\log N)$ or fewer elements.
\begin{proposition}[Cardinality of feasible set in \eqref{p:KA_R}]\label{thm:KA_complexity}
	The cardinality of the feasible set in \eqref{p:KA_R} is at most $ \mathcal{O}(N\log N)$.
\end{proposition}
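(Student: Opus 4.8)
The plan is to bound the number of admissible triples $(N_1,N_2,N_3)$ by fixing one coordinate, counting the remaining feasible pairs for each value of that coordinate, and summing. First I would rewrite the equality constraint of \eqref{p:KA_R} as
\begin{align*}
	4N_1 + 2N_2 + (N_1+1)N_3 = N.
\end{align*}
Since every term on the left is non-negative and $N_2\geq 1$, any feasible point satisfies $N_1\leq N/4$, so there are at most $\lfloor N/4\rfloor + 1 = \mathcal{O}(N)$ candidate values of $N_1$.

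Next, for each fixed $N_1$ I would count the feasible pairs $(N_2,N_3)$. With $N_1$ fixed, the constraint becomes $2N_2 + (N_1+1)N_3 = N - 4N_1$, so once $N_3$ is chosen, $N_2 = \big((N-4N_1)-(N_1+1)N_3\big)/2$ is uniquely determined (and admissible only when this expression is a positive integer). Hence the number of feasible pairs is at most the number of admissible $N_3$, which — since $(N_1+1)N_3\leq N-4N_1\leq N$ — is at most $N/(N_1+1)+1$. Summing over $N_1$ then gives
\begin{align*}
	\sum_{N_1=0}^{\lfloor N/4\rfloor}\Big(\frac{N}{N_1+1} + 1\Big)
	= N\sum_{j=1}^{\lfloor N/4\rfloor+1}\frac{1}{j} + \mathcal{O}(N)
	= \mathcal{O}(N\log N),
\end{align*}
using the standard estimate $\sum_{j=1}^{n}1/j = \mathcal{O}(\log n)$ for the harmonic numbers. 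This establishes the claimed bound, and as an immediate consequence the minimiser of \eqref{p:KA_R} can be found by enumerating the feasible set and evaluating the objective $\mathcal{O}(N\log N)$ times.

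There is no genuine obstacle here; the argument is elementary. The only point requiring a little care is choosing which variable to fix first: fixing $N_1$ makes the residual count proportional to $N/(N_1+1)$, whose sum is the harmonic series and therefore $\mathcal{O}(N\log N)$ (fixing $N_3$ and summing $N/(N_3+4)$ works identically), whereas a crude bound that treats each of the three unknowns as ranging over $\mathcal{O}(N)$ independent values would only yield $\mathcal{O}(N^2)$ or worse.
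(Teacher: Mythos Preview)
Your argument is correct and follows essentially the same approach as the paper: both fix $N_1$, observe that $N_2$ is determined once $N_3$ is chosen, bound the number of admissible $N_3$ by roughly $N/(N_1+1)$, and sum over $N_1$. The only cosmetic difference is that the paper upper-bounds the resulting sum by the integral $\int_0^{(N+2)/4}\frac{N-3x+5/2}{x+1/2}\,dx$ and evaluates it explicitly, whereas you invoke the harmonic-number estimate $\sum_{j=1}^{n}1/j=\mathcal{O}(\log n)$ directly; both lead to the same $\mathcal{O}(N\log N)$ bound.
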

\begin{proof}\let\qed\relax 
	We may verify from \eqref{eq:N_KA} that $ 0 \leq N_1\leq (N-2)/4 $ and $ 0 \leq N_3\leq (N-4N_1)/(N_1+1) $. Consequently, the number of grid points that need to be checked is
	\begin{align*}
	V = \sum_{N_1=0}^{\lfloor\frac{N-2}{4}\rfloor}\Bigg\lfloor\frac{N-4N_1}{N_1+1}\Bigg\rfloor+1 \leq  \int_{0}^{\frac{N+2}{4}} \frac{N-3x+5/2}{x+1/2}dx.
	\end{align*}
	The upper bound follows from ignoring the floor operations, and substituting $ N_1\!=\!x\!-\!1/2$, where $ x\!\in\!\mathbb{R} $, to account for the rectangular integration implied by the sum. Finally,
	\[
		V\!\leq\!(N\!+\!4)\log(N/2\!+\!2)\!-\!3(N\!+\!2)/4\!=\!\mathcal{O}(N\log N)
	\]
	follows from integration by parts.
\end{proof}

\cref{alg:grid} summarizes a simple grid search that finds the solution to \eqref{p:KA_R} for any $ N $ in $ \mathcal{O}(N\log N)$ steps, as implied by \cref{thm:KA_complexity}. We iterate over $ N_1 $ and $ N_3 $, because this choice yields the least upper bound on the number of grid points that need to be checked\footnote{Selecting $ N_1 $ and $ N_2$, or $ N_2$ and $ N_3 $, yields $ \mathcal{O}(N^2) $ points.}. Since the solution to \eqref{p:KA_R} is not necessarily unique, we select the KA$_R $ with the fewest closely spaced sensors, similarly to the MRA in \cref{sec:MRA_unique}. Note that computing the regularizer $ \varsigma $ requires $ \mathcal{O}(N^2) $ floating point operations (flops), whereas evaluating the aperture $ L $ only requires $\mathcal{O}(1)$ flops. Consequently, the time complexity of finding the KA$_R $ with the fewest closely spaced sensors is\footnote{Actually, $ \varsigma $ only needs to be computed when $ L\!=\!L^\star $ in \cref{alg:grid}.} $\mathcal{O}(N^2)$, whereas finding any KA$ _R $, that is, solving \eqref{p:KA_R} in general, has a worst case complexity of $ \mathcal{O}(N\log N) $.

\begin{algorithm}[]
	\caption{Minimum-redundancy parameters of Kl{\o}ve Array} \label{alg:grid}
	\begin{algorithmic}[1]		
		\Procedure{KA$_R $}{$N$}
		\State $ f\gets -\infty $\Comment{initialize objective function value}
		\For{$ N_1 \in \{0:\lfloor (N-2)/4 \rfloor\} $}
		\For{$ N_3\in \{0:\lfloor (N-4N_1)/(N_1+1) \rfloor\} $}
		\State $ N_2 \gets (N-(N_1+1)N_3)/2-2N_1 $
		\If{${N_2 \bmod 1\!=\!0}$}\Comment{$ N_2 $ valid if integer}
		\State Compute $ L $ and $ \varsigma $ using \eqref{eq:L_KA} and \eqref{eq:varsigma} \label{line:varsigma}
		\If{$L-\varsigma > f$} \Comment{better solution found}
		\State $ f\gets L-\varsigma$\Comment{update objective fcn.}
		\For{$ i\in\{1:3\} $} $ N_i^\star\gets N_i $\EndFor
		\EndIf
		\EndIf
		\EndFor
		\EndFor
		\State \Return $N_1^\star,N_2^\star,N_3^\star$ \Comment{optimal array parameters}
		\EndProcedure
	\end{algorithmic}
\end{algorithm}

In a recent related work, we also developed a KA with a constraint on the number of unit spacings \cite{rajamaki2020sparselow}. This \emph{constant unit spacing Kl{\o}ve Array} (KA$_S $) achieves $ S(1)=8 $ for any $ N $ at the expense of a slight increase in redundancy (asymptotic redundancy $ R_\infty=2 $). The minimum-redundancy parameters of the KA$_S $ are found in closed form by application of \cref{thm:general_opt}, similarly to the CNA (cf. \cite[Theorem~1]{rajamaki2020sparselow}).

\subsection{Other generator choices}
Naturally, other choices for the generator $\mathcal{G}$ may yield alternative low-redundancy symmetric array configurations. For example, the Wichmann generator with $ \lambda = 0 $ yields the \emph{Interleaved Wichmann Array} (IWA) \cite{rajamaki2018symmetric}. This array satisfies \ref{c:gmg} by \cref{gmg_1} of \cref{thm:c_suff_gmg} and \ref{c:gpg} by \cref{gpg_1} of \cref{thm:c_suff_gpg}. The IWA has the same asymptotic redundancy as the CNA, but fewer unit spacings. Although finding the minimum-redundancy parameters of the more general \emph{Symmetric Wichmann Array} (S-WA) in closed-form is cumbersome, numerical optimization of $\lambda$ and the other array parameters can slightly improve the non-asymptotic redundancy of the S-WA compared to the IWA. Nevertheless, neither the IWA nor S-WA are considered further in this work, since the KA achieves both lower $ R $ and $ S(1) $. 

Numerical experiments suggest that some prominent array configurations, such as the \emph{Super Nested Array} \cite{liu2016supernested} or  \emph{Co-prime Array} \cite{vaidyanathan2011sparsesamplers}, are not as well suited as generators $\mathcal{G}$ from the redundancy point of view (mirroring $\mathcal{G}$ does not help either). However, several other generators, both with and without contiguous difference co-arrays, remain to be explored in future work. For example, the difference MRA \cite{moffet1968minimumredundancy} and minimum-hole array \cite{bloom1977applications,vertatschitsch1986nonredundant} are interesting candidates.

\section{Performance evaluation of array designs} \label{sec:numerical}
In this section, we compare the sparse array configurations presented in \cref{sec:MRA,sec:generators} in terms of the array figures of merit in \cref{sec:fom}. We also demonstrate the proposed sparse array configurations' capability of resolving more scatterers than sensors in an active sensing scenario. Further numerical results can be found in the companion paper \cite{rajamaki2020sparselow}.

\subsection{Comparison of array figures of merit}\label{sec:comparison}
\cref{tab:summary,tab:asymptotic} summarize the key properties and figures of merit of the discussed sparse array configurations. The parameters of each configuration is chosen such that they maximize the number of contiguous DoFs $ H$ in \eqref{eq:H}. The optimal parameters of the NA and KMA are found similarly to those of the CNA and KA (cf. \cref{thm:param_CNA,thm:param_KA_R})\footnote{Derivations are straightforward and details are omitted for brevity.}. \cref{fig:arrays} illustrates the array configurations for $ N=24  $ sensors. The RRA is omitted since the MRA is known in this case. Note that the sum and difference co-arrays of the symmetric arrays are contiguous and merely shifted copies of each other.

\begin{table*}[]
	\resizebox{1\linewidth}{!}{
		\begin{threeparttable}[t]
			\centering
			\caption{Key properties of considered sparse array configurations. The arrays below the dashed line have a contiguous sum co-array. The symmetric arrays have equivalent sum and difference co-arrays.}\label{tab:summary}
			\begin{tabular}{c|c|c|c|c|c|c}
				Array configuration&Symmetric&Contiguous DoFs\tnote{a}, $ H $&Total DoFs, $ |\mathcal{D}_\Sigma| $&Aperture, $ L $& No. of sensors, $ N $&No. of unit spacings, $ S(1)$\\
				\hline
				General Minimum-Redundancy Array (MRA)&
				no&
				n/a&
				n/a&
				$ \geq (H-1)/2$ and $\leq  H $&
				n/a&
				$ \geq 1 $\\
				Nested Array (NA) \cite{pal2010nested}& 
				no&
				$ (N^2+2N-4)/4$&
				$ H+N/2-1 $& 
				$ (N^2+4N)/4$&
				$ \sqrt{4L+5}-1 $&
				$ N/2 $\\
				Kl{\o}ve-Mossige Array (KMA) \cite{mossige1981algorithms}&
				no&
				$ (2N^2+8N+1)/7$&
				$ H+6N/7+\mathcal{O}(1)$&
				$ (11N^2+16N-61)/49$&
				$ (7\sqrt{11L+15}-8)/11 $&
				$2(N+2)/7$\\
			   \hdashline
				Restricted Minimum-Redundancy Array (R-MRA)&
				no\tnote{b}&
				n/a&
				$ H $&
				$ (H-1)/2 $&
				n/a&
				$ \geq 2 $\\
				Reduced-Redundancy Array (RRA) \cite{hoctor1996arrayredundancy}&
				yes&
				$ 30N-706 $&
				$ H $&
				$ 15N-353 $&
				$ (L+353)/15 $&
				$ 10 $\\
				Concatenated Nested Array (CNA) \cite{rajamaki2017sparselinear}&
				yes&
				$ (N^2+6N-3)/4$&
				$ H $&
				$ (N^2+6N-7)/8$&
				$  2\sqrt{2}\sqrt{L+2}-3$&
				$ N/2 - 1/2 $\\
				Constant unit spacing Kl{\o}ve Array (KA$_S $) \cite{rajamaki2020sparselow}&
				yes&
				$(N^2+10N-83)/4 $&
				$ H$&
				$ (N^2+10N-87)/8$&
				$2\sqrt{2}\sqrt{L+14}-5  $&
				$ 8$\\
				Minimum-Redundancy Kl{\o}ve Array (KA$ _R $)&
				yes&
				$(6N^2+36N-15)/23$&
				$ H$&
				$ (3N^2+18N-19)/23$&
				$\sqrt{23/3}\sqrt{L+2}-3  $&
				$ 4N/23+12/23 $\\
				\hline
			\end{tabular}
				\begin{tablenotes}
					\item[a] The tabulated results are representative of the scaling with $ N $ or $ L $ for the parameters maximizing $ H $. The expressions only hold exactly for specific values of $ N $ and $ L $, which may vary between configurations.
					\item[b] The R-MRA has both asymmetric and symmetric solutions. A symmetric solution exists at least for all $ N\leq 48 $.
				\end{tablenotes}
		\end{threeparttable}
	}
\end{table*} 
\begin{table*}[]
	\centering
	\caption{Asymptotic array figures of merit. The KA$_R $ has at most $ 27\% $ more sensors than the R-MRA, which is less than other known arrays with closed-form sensor positions and a contiguous sum co-array.}\label{tab:asymptotic}
	\begin{tabular}{c|c|c|c|c|c|c|c|c}
	\multirow{2}{*}{Array}&
	\multirow{2}{*}{$ R_\infty $}&
	\multirow{2}{*}{$ F_\infty $}&
	\multicolumn{2}{|c|}{$\lim H/H^\txt{R-MRA} $}&
	\multicolumn{2}{|c|}{$ \lim N/N^\txt{R-MRA} $}&
	\multicolumn{2}{|c}{$\lim L/L^\txt{R-MRA} $}\\
	&&&$ N\to\infty $&$ L\to\infty $&$ H\to\infty $&$ L\to\infty $&$ H\to\infty $&$ N\to\infty $\\
		\hline
		MRA&$ 1.09 - 1.73 $&$ 0.5-1 $&$ 1-1.76 $&$ 0.5-1 $&$ 0.75-1$&$ 0.53-1 $&$ 1-2 $&$ 1-3.52 $\\
		NA&$ 2 $&$ 0.5 $&$ 0.60-0.96 $&$ 0.5 $&$1.02-1.30 $&$ 0.72- 0.92$&$ 2 $&$ 1.19-1.92 $\\
		KMA&$ 1.75 $&$ 0.64 $&$0.68-1.10  $&$0.64 $&$ 0.96-1.21$&$ 0.76-0.97 $&$ 1.57$&$  1.07-1.72$\\
		\hdashline
		R-MRA&$ 1.19-1.92 $&$ 1 $&$ 1 $&$ 1 $&\multicolumn{2}{c|}{$ 1$}&$ 1 $&$ 1 $\\
		RRA&$ \infty $&$ 1 $&$ 0 $&$ 1 $&\multicolumn{2}{c|}{$ \infty $}&$ 1 $&$ 0 $\\
		CNA&$ 2 $&$ 1 $&$ 0.60-0.96 $&$ 1 $&\multicolumn{2}{c|}{$ 1.02-1.30 $}&$ 1 $&$ 0.60-0.96 $\\
		KA$ _S$&$ 2$&$ 1 $&$ 0.60-0.96 $&$ 1 $&\multicolumn{2}{c|}{$ 1.02-1.30 $}&$ 1 $&$ 0.60 - 0.96 $\\
		KA$ _R$&$1.92 $&$ 1 $&$ 0.62-1 $&$ 1 $&\multicolumn{2}{c|}{$1- 1.27$}&$ 1 $&$ 0.62 -1 $\\
		\hline
	\end{tabular}
\end{table*} 

\begin{figure*}[]
	\centering
	\subfigure{\includegraphics[width=1\textwidth]{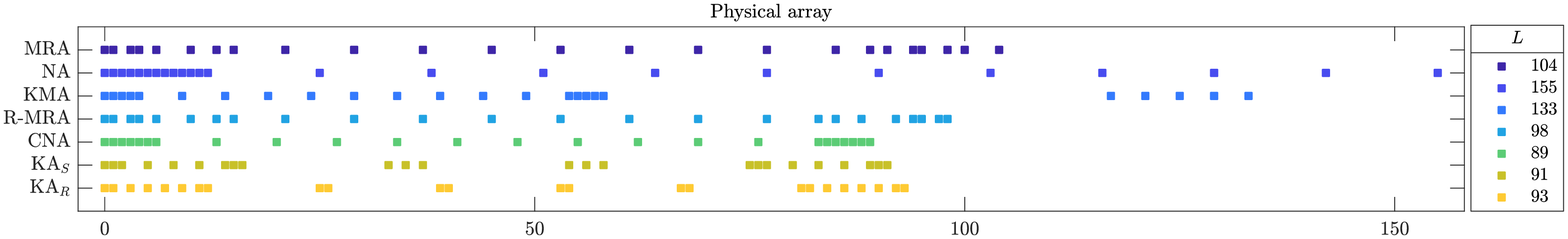}\label{fig:arrays_phys}}
	\hfil
	\subfigure{\includegraphics[width=1\textwidth]{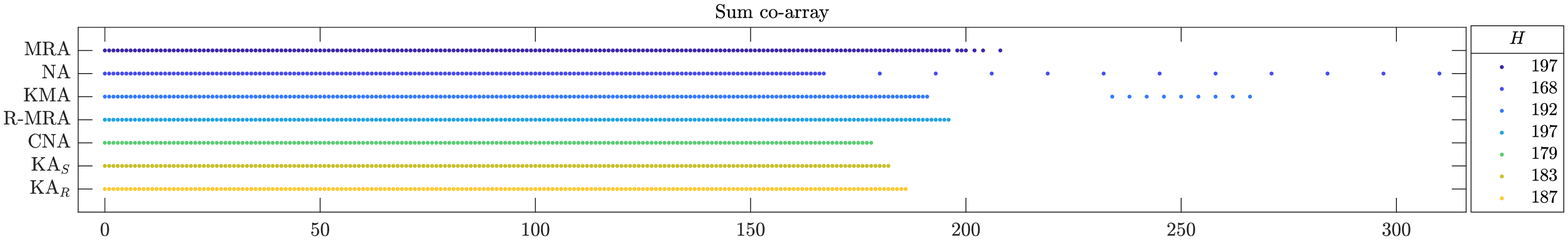}\label{fig:arrays_sca}}
	\hfil
	\subfigure{\includegraphics[width=.95\textwidth]{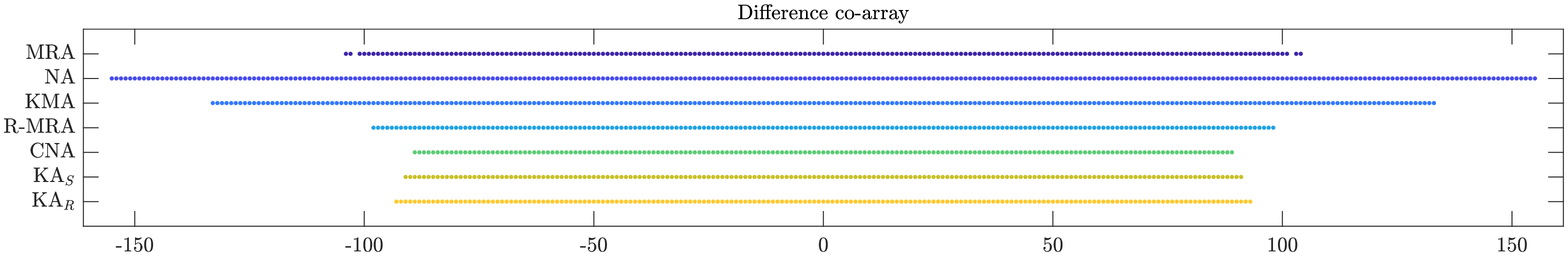}\label{fig:arrays_dca}}
	\caption{Sparse array configurations with $ N=24 $ sensors, and the corresponding sum and difference co-arrays (in units of the smallest inter-sensor spacing). Both co-arrays are contiguous for the R-MRA, CNA, KA$ _S $ and KA$ _R $. The sum and difference co-arrays of any symmetric physical array are merely translated copies of each other.}
	\label{fig:arrays}
\end{figure*}

\subsubsection{Non-contiguous sum co-array}
We first briefly consider the configurations with a non-contiguous sum co-array before studying the arrays with a contiguous sum co-array in more detail. Note from \cref{tab:summary} that even when the sum co-array is non-contiguous, the number of total DoFs $ |\mathcal{D}_\Sigma| $ is only slightly larger than the number of contiguous DoFs $ H $. Specifically, the difference between the two is proportional to the number of physical sensors $ N$, which is much smaller than $ H\propto N^2 $, especially when $ N$ or $H$ grow large. \cref{tab:asymptotic} shows that for the same number of contiguous DoFs $ H\to\infty $, the general MRA has at most $ 25\% $ fewer sensors than the R-MRA by the most conservative estimate. For a fixed aperture $ L\to\infty $, the corresponding number is at most $47\%  $, mainly due to the uncertainty related to $ L $, and thus the asymptotic co-array filling ratio $ F_\infty $, of the general MRA. In particular, any configuration seeking to maximize $ H $,  such as the MRA, must satisfy $ H\geq L$. Moreover, $  H\leq 2L+1  $ holds by definition.

Among the considered configurations with closed-form sensor positions, the KMA achieves the largest number of contiguous DoFs $ H$, and therefore the lowest redundancy\footnote{The construction in \cite{kohonen2017animproved} achieves an approximately $ 1\% $ lower $ R_\infty $.} for a fixed number of sensors $ N $. However, its sum co-array contains holes, since $ F_\infty=7/11< 1$. For equal aperture $ L \to\infty$, the KA$ _R $ has a $57\% $ larger $ H $ than the KMA, but only $31\% $ more physical sensors $ N $. Similarly, the CNA has a two times larger $ H $ than the NA and $ 41\% $ larger $ N $. Conversely, for equal $ N \to\infty$, the CNA has a $ 50\% $ smaller physical aperture $ L $ than the NA, while still achieving the same $ H $. The KA$ _R $ has a $ 42\% $ smaller $ L $ than the KMA, but only a $ 9\% $ smaller $ H $.

\subsubsection{Contiguous sum co-array}
Next, we turn our attention to the configurations with a contiguous sum co-array. \cref{fig:N_vs_L} illustrates the array aperture $ L $ as a function of the number of sensors $ N $. The aperture scales quadratically with $ N $ for all configurations except the RRA with $L\propto N $. For reference, recall that the ULA satisfies $ L = N-1 $. The KA$_R  $ achieves a slightly larger aperture than the rest of the considered parametric arrays. For the same number of sensors (approaching infinity), the KA$ _R $ has a $ 4\% $ larger aperture than the KA$ _S $ and CNA. Conversely, for a fixed aperture, the KA$ _R $ has $ 2\% $ fewer sensors. The differences between the configurations is clear in \cref{fig:N_vs_R}, which shows the redundancy $ R $ as a function of $ N $. By definition, the R-MRA is the least redundant array with a contiguous sum co-array. However, it is also computationally prohibitively expensive to find for large $ N $, and therefore only known for $ N \leq 48 $ \cite{kohonen2014meet,kohonen2015early}. For $ N \geq 49 $, one currently has to resort to alternative configurations that are cheaper to generate, such as the KA$ _R $. The KA$_R$ achieves the lowest redundancy when $ N\geq 72$. When $ 49 \leq N\leq 71$, the RRA is the least redundant configuration. However, the redundancy of the RRA goes to to infinity with increasing $ N $. The KA$ _R $ has between $0$ and $ 27\% $ more sensors than then R-MRA, which is less than any other currently known parametric sparse array configuration with a contiguous sum co-array.

\begin{figure}[]
	\centering
	\includegraphics[width=1\linewidth]{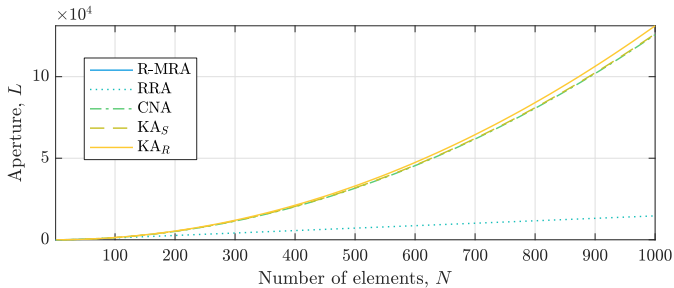}
	\caption{The aperture of the sparse arrays grows quadratically with the number of sensors $ N $. For a given $ N $, the KA$ _R $ has the largest aperture of all currently known parametric arrays with a contiguous sum co-array.}
	\label{fig:N_vs_L}
\end{figure}

\begin{figure}[]
	\includegraphics[width=1\linewidth]{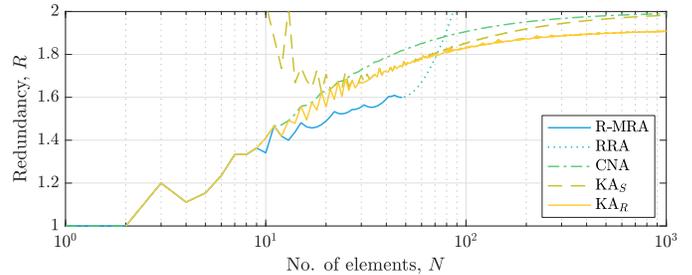}
	\caption{The KA$ _R $ achieves the lowest redundancy $ R $ for $ N\geq 72 $ sensors. When $ 49\leq N\leq 71 $, the RRA is less redundant. The R-MRA is the least redundant configuration with a contiguous sum co-array for any $ N $, but it is computationally
expensive to find and unknown for $ N\geq  49 $.}\label{fig:N_vs_R}
\end{figure}

\cref{fig:N_vs_S1} shows the number of unit spacings $ S(1) $ as a function of $ N $. In general, $ S(1) $ increases linearly with $ N $, and the KA$ _R $ has the smallest rate of growth. The two exceptions are the KA$ _S $ and RRA, which have a constant $ S(1) $. However, unlike the RRA, the redundancy of the KA$ _S $ is bounded (cf.~\cref{fig:N_vs_R}). As discussed in \cref{sec:fom_S}, the number of unit spacings $ S(1) $ may be used as a simplistic indicator of the robustness of the array to mutual coupling. Assessing the effects of coupling ultimately requires actual measurements of the array response, or simulations using an electromagnetic model for the antenna and mounting platform \cite{allen1966mutual,rubio2015mutual,craeye2011areview,friedlander2020theextended}. A detailed study of mutual coupling is therefore beyond the scope of this paper.
\begin{figure}[]
	\centerline{\includegraphics[width=1\linewidth]{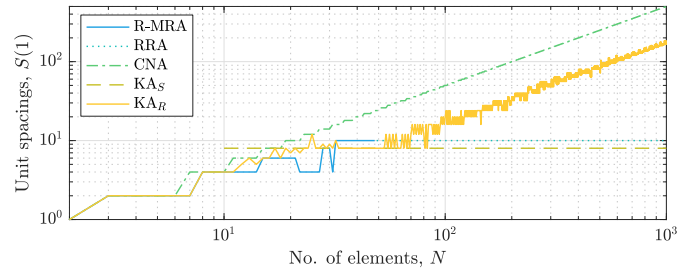}}
	\caption{The number of unit spacings $ S(1) $ of the KA$ _R $ grows linearly with the number of sensors $ N $. Both the RRA and KA$ _S $ have a constant $ S(1) $. However, the RRA does not have a bounded asymptotic redundancy.}\label{fig:N_vs_S1}
\end{figure}

Obviously, many other important figures of merit are omitted here for brevity of presentation. For example, \emph{fragility} and the achievable \emph{beampattern} are natural criteria for array design or performance evaluation. Fragility quantifies the sensitivity of the co-array to physical sensor failures \cite{liu2019robustness}. The array configurations studied in this paper only have \emph{essential} sensors, and therefore high fragility, since the difference (and sum) co-array ceases to be contiguous if a sensor is removed. This is the cost of low redundancy. The beampattern is of interest in applications employing linear processing\footnote{For exceptions where the beampattern is also relevant when employing non-linear processing, see, e.g., \cite{pal2010nested,cohen2018sparseconvolutional}.}. For example, in adaptive beamforming, the one-way (transmit \emph{or} receive) beampattern is critical, whereas in active imaging, the two-way (combined transmit \emph{and} receive) beampattern is more relevant. Although the one-way beampattern of a sparse array generally exhibits high sidelobes, a wide range of two-way beampatterns may be achieved using one \cite{cohen2020sparse} or several \cite{hoctor1990theunifying,kozick1991linearimaging} transmissions and receptions. The arrays discussed in this paper can achieve the same effective beampattern as the ULA of equivalent aperture by employing multiple transmissions and receptions.

\subsection{Active sensing performance}
The estimation of the angles and scattering coefficients from \eqref{eq:z} can be formulated as an on-grid \emph{sparse support recovery} problem, similarly to the passive sensing case described in \cite{pal2015pushing}. In particular, let $ \{\tilde{\varphi}_i\}_{i=1}^V $ denote a set of $V\gg K  $ discretized angles. The task then becomes to solve
\begin{opteq}
	\underset{\bm{\bar{\gamma}}\in\mathbb{C}^{V}}{\txt{minimize}}&\ \|\bm{x}-(\bm{\tilde{A}}\odot \bm{\tilde{A}})\bm{\tilde{\gamma}}\|_2^2\
	\txt{subject to}\ \|\bm{\tilde{\gamma}}\|_0=K, \label{p:cs}
\end{opteq}
where $ \bm{\tilde{A}} \in\mathbb{C}^{N\times V}$ is the known steering matrix sampled at the $ V $ angles,  and $ \bm{\bar{\gamma}} \in\mathbb{C}^V$ the unknown sparse scattering coefficient vector. The sparsity of $ \bm{\bar{\gamma}} $ is enforced by the $ \ell_0 $ pseudonorm $ \|\bm{\tilde{\gamma}}\|_0 \triangleq \sum_{i=1}^V\mathbbm{1}(\tilde{\gamma}_i\neq 0)$, which enumerates the number of non-zero entries. Although \eqref{p:cs} is a non-convex optimization problem due to the $ \ell_0 $ pseudonorm, it can be approximately solved using \emph{Orthogonal Matching Pursuit} (OMP). OMP is a greedy algorithm that iteratively finds the $k = 1,2\ldots, K $ columns in the dictionary matrix $ \bm{\tilde{A}}\odot \bm{\tilde{A}} $ whose linear combination best describes the data vector $ \bm{x} $ in an $ \ell_2 $ sense. For details on the OMP algorithm, see \cite{tropp2004greed}, \cite[p.~65]{foucart2013amathematical}.

We now compare the active sensing performance of the KMA and KA$ _R $ using OMP. As demonstrated in \cref{sec:comparison}, the KMA and KA$ _R $ achieve the lowest redundancy of the considered configurations with closed-form sensor positions. We consider $ K= 65$ scatterers with angles $ \varphi_k $ uniformly spaced between $ -60^\circ $ and $ 60^\circ $. The scattering coefficients are spherically distributed complex-valued random variables, i.e., $ \gamma_k=z_k/|z_k| $, where $ z\sim\mathcal{CN}(0,1) $. We assume that the arrays consist of ideal omnidirectional sensors with unit inter-sensor spacing $ \delta = 1/2 $, such that the $ (n,i) $th entry of the sampled steering matrix is $ \tilde{A}_{n,i} = \exp(j\pi d_n \sin\tilde{\varphi}_i) $. Angles $ \{\tilde{\varphi}_i\}_{i=1}^V $ form a uniform grid of $ V\!=\!10^4 $ points between $ -75^\circ $ and $ 75^\circ $.

\cref{fig:arrays2} shows that for a fixed aperture $ L=70 $, the KA$ _R $ has $ 141/101\approx 40\% $ more contiguous DoFs than the KMA, at the expense of $ 21/17\approx 24\% $ more sensors\footnote{For reference, the R-MRA with aperture $ L=70 $ has $ N=20 $ sensors.}. The KA$ _R $ can therefore resolve more scatterers than the KMA of equivalent aperture at little extra cost. This is illustrated in \cref{fig:omp_wo_noise}, which shows the noiseless spatial spectrum estimate produced by the OMP algorithm. Both arrays resolve more scatterers than sensors, although the KMA misses some scatterers and displays spurious peaks instead. The KA$_R $ approximately resolves all scatterers due to its larger co-array. \cref{fig:omp_wo_noise} shows the noisy OMP spectra when the SNR of the scene, defined as $ K/\sigma^2 $, is $ 5 $ dB. The spectrum degrades more severely in case of the KMA, partly because it has fewer physical sensors than the KA$ _R $. These conclusions are validated by the empirical \emph{root-mean square error} (RMSE) of the angle estimates, where $ 1000 $ random realizations of the noise and scattering coefficients were used. In the noiseless case, the RMSE of the KMA is $ 8.6^\circ $, whereas the KA$_R  $ achieves $ 4.2^\circ $. In the noisy case the corresponding figures are $ 16.2^\circ $ (KMA) and $ 7.9^\circ $ (KA$ _R $).
\begin{figure}[]
	\centering
	\subfigure{\includegraphics[width=1\linewidth]{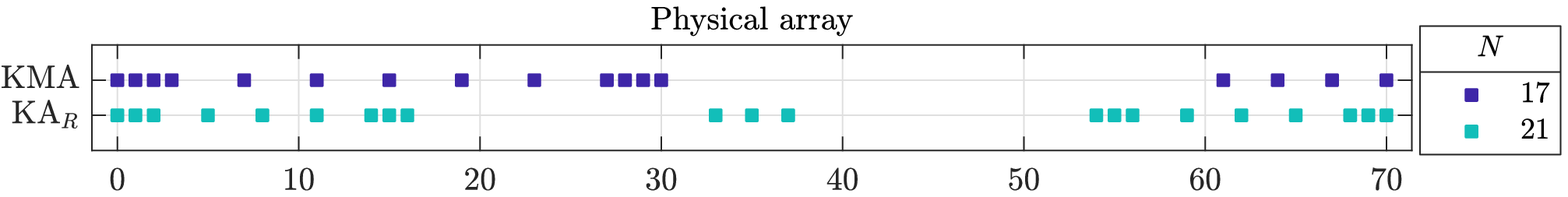}\label{fig:arrays_phys}}
	\hfil
	\subfigure{\includegraphics[width=1\linewidth]{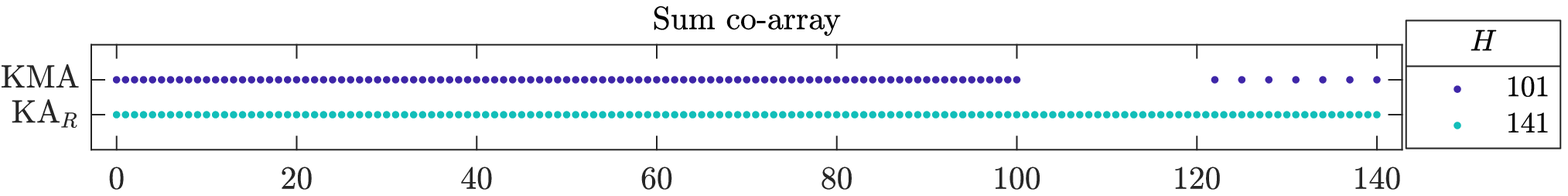}\label{fig:arrays_sca}}
	\caption{Equi-aperture array configurations. The KA$ _R $ has $ 4 $ more physical sensors than the KMA, but $ 40 $ more contiguous elements in its sum co-array.}
	\label{fig:arrays2}
\end{figure}

\begin{figure}[]
	\centering
	\subfigure{\includegraphics[width=1\linewidth]{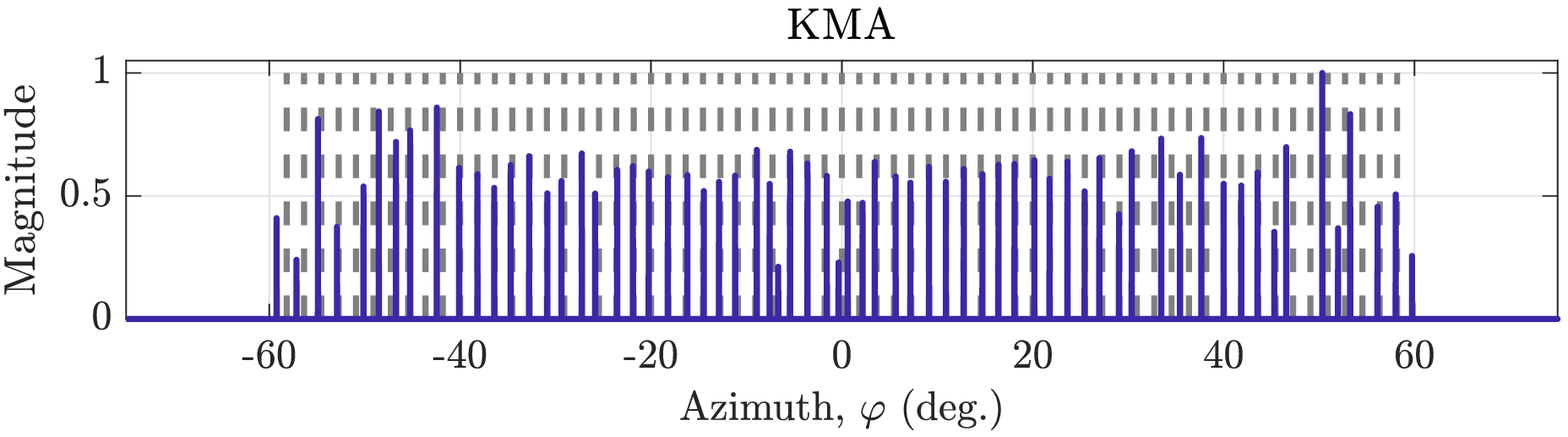}\label{fig:1}}
	\hfil
	\subfigure{\includegraphics[width=1\linewidth]{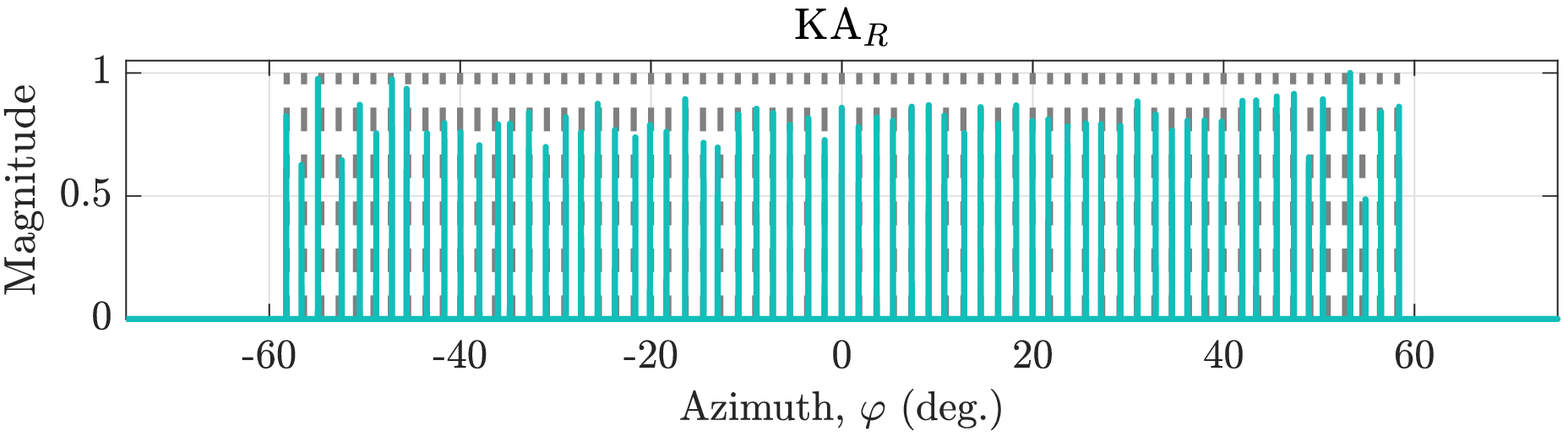}\label{fig:2}}
	\caption{Noiseless spatial spectrum estimate. The KA$ _R $ resolves more scatterers than the KMA, due to its larger sum co-array. Both arrays find more scatterers than sensors. The dashed lines indicate the true scatterer directions.}
	\label{fig:omp_wo_noise}
\end{figure}

\begin{figure}[]
	\centering
	\subfigure{\includegraphics[width=1\linewidth]{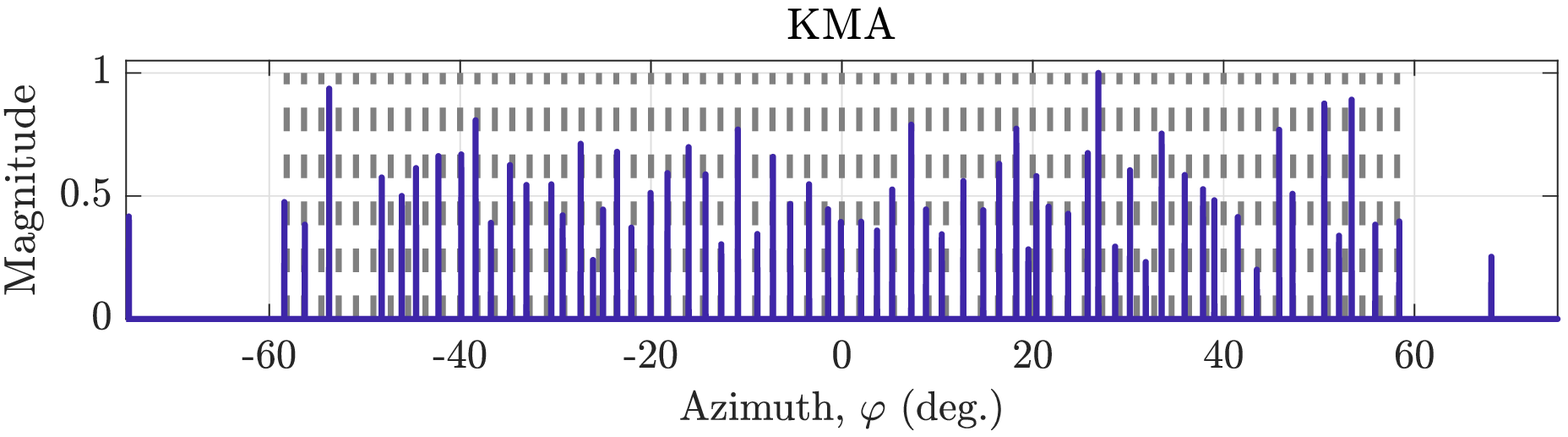}\label{fig:1}}
	\hfil
	\subfigure{\includegraphics[width=1\linewidth]{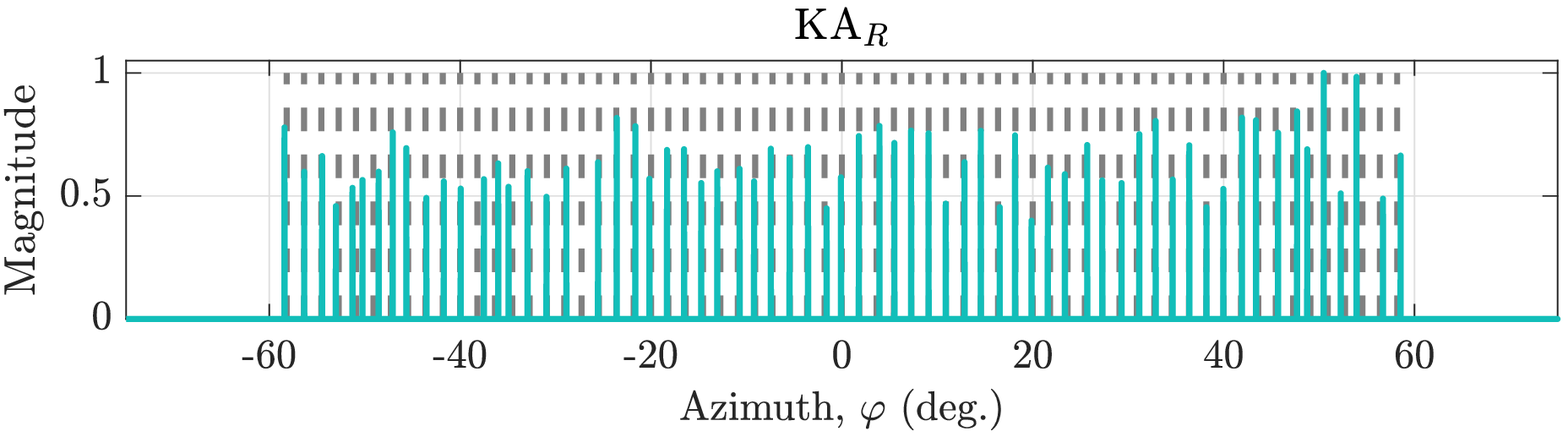}\label{fig:2}}
	\caption{Noisy spatial spectrum estimate. The KMA displays multiple false peaks as it has fewer physical (and sum co-array) elements than the KA$ _R $.}
	\label{fig:omp_w_noise}
\end{figure}

\section{Conclusions and future work} \label{sec:conclusions}
This paper proposed a general symmetric sparse linear array design suitable for both active and passive sensing. We established a necessary and sufficient condition for the sum and difference co-array to be contiguous, and identified sufficient conditions that substantially simplify the array design. We studied two special cases in detail, the CNA and KA, both of which achieve a low redundancy and can be generated for any number of sensors $ N $. The KA achieves the lowest asymptotic redundancy among the considered array configurations. This also yields an upper bound on the redundancy of the R-MRA, whose exact value remains an open question. The upper bound may perhaps be tightened by novel sparse array designs suggested by the proposed array design methodology.

In future work, it would be of interest to characterize the redundancy of other symmetric arrays, including recursive/fractal arrays \cite{yang2018aunified,cohen2019sparsefractal,cohen2020sparse} that have a contiguous sum co-array. Another direction is investigating the advantages of symmetric arrays over co-array equivalent asymmetric arrays in more detail. This could further increase the relevance of the symmetric sparse array configurations studied in this paper.

\section*{Acknowledgment}
The authors would like to thank Dr. Jukka Kohonen for bringing the Kl{\o}ve basis to their attention and for the feedback on this manuscript, as well as for the many stimulating discussions regarding additive bases.

\appendices

\section{Contiguous difference co-array of the KMA} \label{a:KMA_diff_coarray}
We now show that the Kl{\o}ve-Mossige array $\mathcal{G}$ in \cref{def:S-KMA} has a contiguous difference co-array. By symmetry of the difference co-array, it suffices to show that $ \mathcal{G}-\mathcal{G}\supseteq\mathcal{C}\supseteq\{0:\max\mathcal{G}\} $. We may easily verify that $ \mathcal{G}-\mathcal{G}\supseteq\mathcal{C} $ holds for
\begin{align*}
\mathcal{C}&=(\mathcal{D}_\txt{CNA}-\mathcal{D}_\txt{CNA})\cup (\mathcal{D}_3-\mathcal{D}_\txt{CNA}+2\max\mathcal{D}_\txt{CNA}+1)\\
&\supseteq\{0:\max\mathcal{D}_\txt{CNA}\}\cup (\mathcal{D}_3+\mathcal{D}_\txt{CNA}+\max\mathcal{D}_\txt{CNA}+1),
\end{align*}
where the second line follows from the fact that the CNA is symmetric and has a contiguous difference co-array. Consequently, $ \mathcal{C}\supseteq\{0:\max\mathcal{G}\} $ holds if and only if
\begin{align*}
\mathcal{D}_3+\mathcal{D}_\txt{CNA}=\{0:\max\mathcal{G}-\max\mathcal{D}_\txt{CNA}-1\}.
\end{align*}
Due to the periodicity of $ \mathcal{D}_3 $, this condition simplifies to
\begin{align*}
\mathcal{D}_4+\mathcal{D}_\txt{CNA}= \{0:\max\mathcal{D}_\txt{CNA}+N_1^2\},
\end{align*}
where $ \mathcal{D}_4=\{0:N_1:N_1^2\} $, and by \cref{def:CNA} we have
\begin{align*}
\mathcal{D}_4+\mathcal{D}_\txt{CNA}=\{\mathcal{D}_4+\mathcal{D}_1\}&\cup\{\mathcal{D}_4+\mathcal{D}_2+N_1\}\\ &\cup\{\mathcal{D}_4+\mathcal{D}_1+N_2(N_1+1)\}.
\end{align*}
As $ \mathcal{D}_1+\mathcal{D}_4=\{0:N_1(N_1+1)-1\} $, it suffices to show that
\begin{align*}
\mathcal{D}_4+\mathcal{D}_2 &\supseteq \{N_1^2:(N_2-1)(N_1+1)\}.
\end{align*}
By \cref{def:CNA,def:S-KMA}, we have
\begin{align*}
\mathcal{D}_4+\mathcal{D}_2 &=\{kN_1+l(N_1+1)\ |\ k\in\{0: N_1\}; l\in\{0: N_2-1\}\}\\
&=\{i(N_1+1)-k\ |\ k\in\{0: N_1\};i-k\in\{0: N_2-1\}\}\\
&\supseteq\{i(N_1+1)-k\ |\ k\in\{0: N_1\};i\in\{N_1: N_2-1\}\}\\
&\supseteq \{N_1^2:(N_2-1)(N_1+1)\},
\end{align*}
which implies that the difference co-array of $\mathcal{G}$ is contiguous.

\section{First hole in the sum co-array of the KMA}\label{a:S-KMA_lambda}
Let $\mathcal{G}$ denote the KMA in \cref{def:S-KMA}. Furthermore, let $ H\in\mathbb{N} $, as defined in \eqref{eq:H}, be the first hole in $\mathcal{G}+\mathcal{G}$. In the following, we show that
\begin{align*}
	H = \begin{cases}
	2\max\mathcal{G}+1,&\txt{if } N_1 +N_2=1\\
	h+1,&\txt{if } N_1 \geq 1 \txt{ and } N_2 = 1\\
	h,&\txt{otherwise},
	\end{cases}
\end{align*}
where the non-negative integer $ h $ is
\begin{align}
h&=\max\mathcal{G}+\max\mathcal{D}_\txt{CNA}+1\nonumber\\
&=N_3(\max\mathcal{D}_\txt{CNA}+1+N_1^2) +2\max\mathcal{D}_\txt{CNA}+1. \label{eq:h}
\end{align}
The first case, which we only briefly mention here, follows trivially from the fact that $\mathcal{G}$ degenerates to the ULA when either $N_1=0 $ and $ N_2 = 1 $, or $ N_1=1 $ and $ N_2=0 $. We prove the latter two cases by contradiction, i.e., by showing that $h +1\in \mathcal{G}+\mathcal{G} $, respectively $h \in \mathcal{G}+\mathcal{G} $, leads to an impossibility. Verifying that indeed $\mathcal{G}+\mathcal{G}\supseteq\{0:h\} $, respectively $\mathcal{G}+\mathcal{G}\supseteq\{0:h-1\} $, is left as an exercise for the interested reader.

We start by explicitly writing the sum co-array of $\mathcal{G}$ as
\begin{align*}
\mathcal{G}+\mathcal{G} = (\mathcal{D}_\txt{CNA}+\mathcal{D}_\txt{CNA})&\cup(\mathcal{D}_\txt{CNA}+\mathcal{D}_3+2\max\mathcal{D}_\txt{CNA}+1)\\
&\cup (\mathcal{D}_3+\mathcal{D}_3+4\max\mathcal{D}_\txt{CNA}+2).
\end{align*} 
Note that the CNA has a contiguous sum co-array, that is,
\begin{align*}
\mathcal{D}_\txt{CNA}+\mathcal{D}_\txt{CNA}=\{0:2\max\mathcal{D}_\txt{CNA}\}.
\end{align*}
Furthermore, it was shown in Appendix~\ref{a:KMA_diff_coarray} that
\begin{align*}
\mathcal{D}_\txt{CNA}+\mathcal{D}_3+2\max\mathcal{D}_\txt{CNA}+1 = \{2\max\mathcal{D}_\txt{CNA}+1:h-1\}.
\end{align*}
Consequently, $h\in \mathcal{G}+\mathcal{G} $ holds if and only if
\begin{align*}
h\in \mathcal{D}_3+\mathcal{D}_3+4\max\mathcal{D}_\txt{CNA}+2.
\end{align*}
By \cref{def:S-KMA}, there must therefore exist non-negative integers $ k\in\{0:2N_1\}$ and $ l\in\{0:2(N_3-1)\} $ such that 
\begin{align}
h\!=\!kN_1+l(\max\mathcal{D}_\txt{CNA}+1+N_1^2)+4\max\mathcal{D}_\txt{CNA}+2.\label{eq:a2_c1}
\end{align}
Substituting \eqref{eq:h} into \eqref{eq:a2_c1} and rearranging the terms yields
\begin{align*}
(N_3-l)(\max\mathcal{D}_\txt{CNA}+1+N_1^2)=2\max\mathcal{D}_\txt{CNA}+1+kN_1.
\end{align*}
Since $ k\in\{0:2N_1\}$, the following inequality must hold:
\begin{align*}
\frac{2\max\mathcal{D}_\txt{CNA}+1}{N_1^2+\max\mathcal{D}_\txt{CNA}+1} \leq N_3-l \leq \frac{2N_1^2+2\max\mathcal{D}_\txt{CNA}+1}{N_1^2+\max\mathcal{D}_\txt{CNA}+1}.
\end{align*}
This reduces to $0 < N_3-l< 2 $, or more conveniently, $ N_3-l =1$, since $ N_3-l $ is an integer. Consequently, we have
\begin{align}
N_1(N_1-k)=\max\mathcal{D}_\txt{CNA},\label{eq:a2_c2}
\end{align}
where $ \max\mathcal{D}_\txt{CNA} \geq 0$ leads to $ N_1-k \in\{0:N_1\} $. Substituting $\max\mathcal{D}_\txt{CNA}=L$ in \eqref{eq:L_CNA} into \eqref{eq:a2_c2} yields
	\begin{align}
	N_1-k = N_2+1+\frac{N_2-1}{N_1}. \label{eq:a2_c3}
	\end{align}
	Combined with $ N_1-k\leq N_1 $, this implies that
	\begin{align*}
	N_1\geq \frac{N_2+1+\sqrt{(N_2+1)^2+4(N_2-1)}}{2}\geq N_2+1,
	\end{align*}
	since $ N_1,N_2\geq 1 $. We identify the following two cases:
	\begin{enumerate}[label=\roman*)]
	 \item If $ N_2=1$, then \eqref{eq:a2_c3} yields that $ N_1-k=2$, implying that $H> h $. However, it is straightforward to verify that $H= h+1 $ from the fact that when $ h $ is replaced by $ h+1 $ in \eqref{eq:a2_c1}, no integer-valued $ N_1\geq 2 $ satisfies the equation.
 	\item If $ N_2\geq 2 $, then $ N_1 \leq  N_2-1$ follows from \eqref{eq:a2_c3}, since $ (N_2-1)/N_1 $ must be an integer. This leads to a contradiction, since both $ N_1\geq N_2+1$ and $ N_1\leq N_2-1$ cannot hold simultaneously. Consequently, $H= h $ holds.
	\end{enumerate}
Finally, $ H=h $ also holds when $ N_1=0 $ and $ N_2 \geq 2$, or $ N_1\geq 2 $ and $ N_2=0 $, since $\mathcal{G}$ degenerates into the NA in this case. This covers all of the possible values of $ H $.

\bibliographystyle{IEEEtran}
\bibliography{IEEEabrv,bibliography}

\end{document}